\theoremstyle{definition}
\newtheorem{thm}{Theorem}
\newtheorem{ass}[thm]{Assumption}
\newtheorem{cor}[thm]{Corollary}
\newtheorem{defn}[thm]{Definition}
\newtheorem{prop}[thm]{Proposition}
\DeclareMathOperator{\Pois}{Pois}
\newcommand{\G}{\mathcal{G}}
\newcommand{\I}{\mathcal{I}}
\newcommand{\ind}[3]{\langle #1, #2 \mid #3 \rangle}
\newcommand{\musep}[3]{#1 \perp_\mu #2 \mid #3}
\newcommand{\musepG}[4]{#1 \perp_\mu #2 \mid #3\ [#4]}
\newcommand{\an}{\text{an}}
\newcommand{\E}{\text{E}}
\newcommand{\pa}{\text{pa}}
\newcommand{\PR}{\text{P}}
\newcommand{\md}{\mathrm{d}}
\newenvironment{subalgorithm}[1][htb]
{
	\begin{algorithm}[#1]%
	}{\end{algorithm}}
\title{Causal screening in dynamical systems}
\author{%
	Søren Wengel Mogensen \\
	Department of Mathematical Sciences\\
	University of Copenhagen\\
	Copenhagen, Denmark \\
	\texttt{swengel@math.ku.dk} \\
}
\begin{document}

\maketitle

\begin{abstract}
	Many classical algorithms output graphical representations of causal 
	structures by testing conditional independence among a set of random 
	variables. In dynamical systems, local independence can be used analogously 
	as a testable implication of the underlying data-generating process. 
	We suggest some inexpensive methods for causal screening which provide 
	output with a sound causal interpretation under the assumption of ancestral 
	faithfulness. The popular model class of linear Hawkes processes is used 
	to provide an example of a dynamical causal model. 
	We argue 
	that for sparse causal graphs the output will often be close to complete. 
	We give 
	examples of this framework and apply it to a challenging biological system.
\end{abstract}

\section{INTRODUCTION}

Constraint-based causal learning is computationally and statistically 
challenging. There is a large literature on learning structures that are 
represented by directed acyclic graphs (DAGs) or margina\-li\-za\-tions thereof
(see \cite{handbookGraphical2019} for references). The fast causal 
inference algorithm \citep[FCI,][]{spirtes1993} provides in a certain sense 
maximally informative 
output \citep{zhang2008}, but 
at the cost of 
using a large number of conditional independence tests \citep{colombo2012}. To 
reduce 
the 
computational cost, other methods provide output which has a 
sound causal interpretation, but may be less informative. Among these are the 
anytime FCI 
\citep{spirtesAnytime2001} and 
RFCI \citep{colombo2012}. A recent algorithm, ancestral causal inference 
\citep[ACI,][]{magliacaneACI2016}, aims to learn only the directed part of 
the 
underlying graphical 
structure which allows for a sound causal interpretation even though some 
information is lost. 

In this paper, we describe some simple methods for learning causal structure in 
dynamical systems represented by stochastic processes. Many authors have 
described frameworks and algorithms for learning structure in 
systems of time series, ordinary differential 
equations, stochastic differential equations, and point processes. 
However, most of these methods do not have a clear causal interpretation when 
the 
observed processes are part of a larger system and most of the current 
literature is either non-causal in nature, or requires that there 
are no unobserved processes.

Analogously to testing conditional independence when learning DAGs, one can 
use 
tests of local independence in the case of dynamical systems. 
\cite{eichler2013}, \cite{meek2014}, and \cite{mogensenUAI2018} propose 
algorithms for 
learning graphs that represent local independence structures. 
We show empirically that we can recover features of their graphical learning 
target using 
considerably 
fewer tests of local independence. First, we suggest a learning 
target which is easier to learn, though still conveys 
useful causal information, analogously to ACI \citep{magliacaneACI2016}. 
Second, 
the proposed algorithm is only 
guaranteed to 
provide a supergraph of the learning target and this also reduces the number of 
local 
independence tests drastically. A central point is that our proposed methods 
retain a  causal interpretation in the sense that absent edges in the output 
correspond to implausible causal connections.

\cite{meek2014} suggests learning a directed graph to 
represent a causal dynamical system and 
gives 
a learning 
algorithm which we will describe as a {\it simple screening algorithm} (Section 
\ref{ssec:simpScreen}). We show 
that this algorithm can be given a sound interpretation under a weaker 
faithfulness assumption than that of \cite{meek2014}. We also 
provide a simple interpretation of the output of this algorithm and we show 
that similar screening algorithms can give comparable results using 
considerably 
fewer tests of local independence.

All proofs are provided in the supplementary material.

\begin{figure*}
	\begin{subfigure}{.48\linewidth}
		\centering
		\includegraphics[width=1\linewidth]{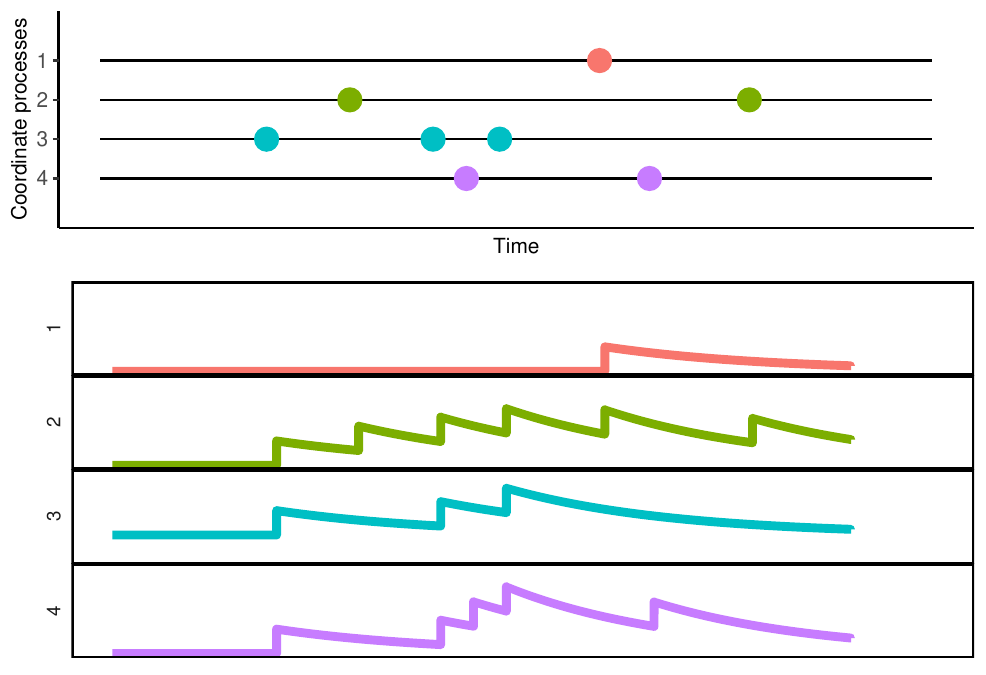}
		\caption{\label{fig:data} Top: Example data from a four-dimensional 
			Hawkes process. Bottom: The corresponding intensities. The time 
			axis is aligned between the two plots.}
	\end{subfigure} \hfill
	\begin{subfigure}{.48\linewidth}
		\centering
		\begin{tikzpicture}[scale=0.7]
		\tikzset{vertex/.style = {shape=circle,draw,minimum size=1.5em, inner 
				sep = 0pt}}
		\tikzset{vertexFac/.style = {shape=rectangle,draw,minimum size=1.5em, 
				inner sep = 0pt, gray}}
		\tikzset{edge/.style = {->,> = latex', thick}}
		\tikzset{edgebi/.style = {<->,> = latex', thick}}
		\tikzset{every loop/.style={min distance=8mm, looseness=5}}
		
		\node[vertex] (a) at  (0,0) {1};
		\node[vertex] (b) at  (2,0) {2};
		\node[vertex] (c) at  (0,-2) {3};
		\node[vertex] (d) at  (2,-2) {4};
		\draw[edge] (a) to (b);
		\draw[edge] (c) to (b);
		\draw[edge] (c) to (d);
		
		\draw[edge, loop left] (a) to (a);
		\draw[edge, loop right] (b) to (b);
		\draw[edge, loop left] (c) to (c);
		\draw[edge, loop right] (d) to (d);

		\node[vertex] (a1) at  (0+6,0) {1};
		\node[vertex] (b1) at  (2+6,0) {2};
		\node[vertex] (d1) at  (2+6,-2) {4};
		\draw[edge] (a1) to (b1);
		\draw[edge, bend left] (d1) to (b1);
		\draw[edge, bend left] (b1) to (d1);
		
		\draw[edge, loop left] (a1) to (a1);
		\draw[edge, loop right] (b1) to (b1);
		\draw[edge, loop right] (d1) to (d1);
		
		\end{tikzpicture}
		\vspace{.6cm}
		\caption{\label{fig:causal} Left: The causal graph (see Section 
			\ref{ssec:dynCauModel}) of a four-dimensional Hawkes process.
			Right: Learning output of standard approach (see Section 
			\ref{sec:hawkes}) when 3 is 
			unobserved. When 3 is unobserved, 2 is predictive of 4 and vice 
			versa 
			(heuristically, more events in process 2 indicate more events in 3 
			which in 
			turn 
			indicates 
			more events in 4). However, they are not causally connected and 
			using 
			local independence one can learn that 2 is not a parent of 4. This 
			is 
			important to predict what would happen under interventions in the 
			system as the right-hand graph indicates that an intervention on 2 
			would 
			change the distribution of 4 even though this is not the case as 
			$g^{\alpha 2} = 0$ for $\alpha\in \{1,3,4\}$.}
	\end{subfigure}
	\caption{\label{fig:fig1} Subfigure \ref{fig:data} shows data generated 
		from the system in \ref{fig:causal} (left). Until the first event all 
		intensities are constant (equal to $\mu_\alpha$ for the 
		$\alpha$-process). The first event occurs in process 3. We see that 
		$g^{23}$, $g^{33}$, and $g^{43}$ are different from zero as encoded by 
		the graph in \ref{fig:causal} (left). 
		Therefore the event makes the intensity processes of 2, 3, and 4 jump, 
		making new events in these processes more likely in the immediate 
		future (\ref{fig:data}, bottom).}
\end{figure*}

\section{HAWKES PROCESSES}
\label{sec:hawkes}

Local independence can be defined in a wide range of discrete-time and 
continous-time dynamical models (e.g., 
point 
processes \citep{didelez2000}, time 
series \citep{eichler2012}, and diffusions \citep{ mogensenUAI2018}. See also 
\cite{commenges2009}), and the algorithmic 
results we present apply to all these classes of models. However, the 
causal interpretation will differ between these model classes, and we will use 
the 
{\it linear 
	Hawkes 
	processes} to exemplify the framework. \cite{laubHawkes2015} give an 
	accessible introduction to this continuous-time
model class and \cite{linigerThesis}, \cite{bacry2015}, and \cite{daleyVere} 
provide more background. 
Hawkes 
processes have also 
been studied in the 
machine 
learning community in recent years \citep{zhou2013, zhou2013b, luo2015, xu2016, 
etesami2016, 
achab2017, tan2018, xu2018, trouleau2019}. It 
is 
important to note 
that these 
papers all consider the case of full observation, i.e., every coordinate 
process 
is observed. In causal systems that are not fully observed that assumption 
may lead to false conclusions (see Figure \ref{fig:causal}). Our work addresses 
the learning problem without 
the assumption of full observation, hence there can be unknown and unobserved 
confounding processes. 

On a filtered probability space, $(\Omega, 
\mathcal{F}, 
(\mathcal{F}_t), 
\PR)$, 
we consider an $n$-dimensional
multivariate point process, $X = (X^1, \ldots, X^n)$.  
$\mathcal{F}_t$ is a filtration, i.e., a nondecreasing family of 
$\sigma$-algebras, and it represents the information which is available at a 
specific 
time point. Each coordinate 
process $X^\alpha$ is described by a sequence of positive, stochastic event 
times 
$T_1^\alpha, T_2^\alpha, \ldots$ such that $T_j^\alpha > T_i^\alpha$ almost 
surely for $j 
>i$. We 
let $V 
= \{1,\ldots,n\}$. This can also be formulated in terms of a counting 
process, 
$N$, such that $N_s^\alpha = \sum_i \mathds{1}_{(T_i \leq s)} $, $\alpha\in 
V$. There exists so-called 
{\it intensity 
	processes}, $\lambda = (\lambda^1,\ldots,\lambda^n)$, such that 

\[
\lambda_t^\alpha = \lim_{h\rightarrow 0} \frac{1}{h} \PR(N_{t+h}^\alpha 
- N_t^\alpha = 1 \mid \mathcal{F}_t)
\]

and the intensity at time $t$ can therefore be thought of as describing 
the 
probability 
of a 
jump in the immediate future after time $t$ conditionally on the 
history until time $t$ as captured by the $\mathcal{F}_t$-filtration. 
In 
a linear Hawkes model, 
the intensity of the $\alpha$-process, $\alpha\in V$, is of the simple form

\begin{align*}
\lambda_t^\alpha &= \mu_\alpha + \sum_{\gamma \in V} \int_{0}^{t} 
g^{\alpha\gamma}(t-s)\ \md N_s^\gamma \\ & = \mu_\alpha + \sum_{\gamma\in V} 
\sum_{i:T_i^\gamma < t} 
g^{\alpha\gamma}(t-T_i^\gamma)
\end{align*}

\noindent where $\mu_\alpha \geq 0$ and the function $g^{\alpha\gamma} 
: 
\mathbb{R}_{+} 
\rightarrow \mathbb{R}$ is nonnegative for all  $\alpha,\gamma \in V$. From 
the above formula, we see that if $g^{\beta\alpha} = 0$, then the 
$\alpha$-process does not enter directly into the intensity of the 
$\beta$-process and we will formalize this 
observation in subsequent sections. The intensity processes determine how the 
Hawkes process evolves and if $g^{\beta\alpha} = 0$ then the 
$\alpha$-process does not directly influence the evolution of the 
$\beta$-process (it may of course have an indirect influence which is mediated 
by other processes). Figure \ref{fig:data} provides an example of data from a 
linear Hawkes process and an illustration of its intensity processes.

\begin{figure*}
	\begin{subfigure}{0.48\linewidth}
		\centering
		\begin{tikzpicture}[scale=0.7]
		\tikzset{vertex/.style = {shape=circle,draw,minimum size=1.5em, inner 
				sep = 0pt}}
		\tikzset{edge/.style = {->,> = latex', thick}}
		\tikzset{edgebi/.style = {<->,> = latex', thick}}
		\tikzset{every loop/.style={min distance=8mm, looseness=5}}
		\tikzset{vertexFac/.style = {shape=rectangle,draw,minimum size=1.5em, 
				inner sep = 0pt}}
		
		\node[vertex] (a) at  (-3,0) {$\alpha$};
		\node[vertexFac] (b) at  (-1.5,2) {$\beta$};
		\node[vertexFac] (c) at  (0,0) {$\gamma$};
		\node[vertex] (d) at  (1.5,2) {$\delta$};
		\node[vertex] (e) at  (3,0) {$\epsilon$};
		\node[vertexFac] (f) at  (0,4) {$\phi$};
		
		\draw[edge, loop left] (a) to (a);
		\draw[edge, loop left] (b) to (b);
		\draw[edge, loop right] (c) to (c);
		\draw[edge, loop right] (d) to (d);
		\draw[edge, loop right] (e) to (e);
		\draw[edge, loop left] (f) to (f);
		\draw[edge, bend left = 45] (f) to (e);
		
		\draw[edge] (a) to (b);
		\draw[edge, bend left = 20] (b) to (c);
		\draw[edge, bend left = 20] (c) to (b);
		\draw[edge] (b) to (d);
		\draw[edge] (d) to (c);
		\draw[edge] (d) to (e);
		\draw[edge] (f) to (b);
		\draw[edge] (f) to (d);
		
		\end{tikzpicture}
	\end{subfigure}
	\begin{subfigure}{0.48\linewidth}
		\centering
		\begin{tikzpicture}[scale=0.7]
		\tikzset{vertex/.style = {shape=circle,draw,minimum size=1.5em, inner 
				sep = 0pt}}
		\tikzset{vertexFac/.style = {shape=rectangle,draw,minimum size=1.5em, 
				inner sep = 0pt, gray}}
		\tikzset{edge/.style = {->,> = latex', thick}}
		\tikzset{edgebi/.style = {<->,> = latex', thick}}
		\tikzset{every loop/.style={min distance=8mm, looseness=5}}
		
		\node[vertex] (a) at  (-3,0) {$\alpha$};
		\node[vertexFac] (b) at  (-1.5,2) {$\beta$};
		\node[vertexFac] (c) at  (0,0) {$\gamma$};
		\node[vertex] (d) at  (1.5,2) {$\delta$};
		\node[vertex] (e) at  (3,0) {$\epsilon$};
		\node[vertexFac] (f) at  (0,4) {$\phi$};
		
		\draw[edge, loop left] (a) to (a);
		\draw[edge, loop above] (d) to (d);
		\draw[edge, loop right] (e) to (e);
		
		\draw[edge] (a) to (d);
		\draw[edge] (d) to (e);
		
		\end{tikzpicture}
	\end{subfigure}
	\caption{\label{fig:paG} Left: A causal graph on nodes $V = 
		\{\alpha,\beta,\gamma,\delta,\epsilon, \phi \}$. Right: The 
		corresponding 
		parent 
		graph on nodes $O = \{\alpha,\delta,\epsilon\}$. Note that causal 
		graphs 
		and parent graphs may contain cycles. The parent graph does not contain 
		information on the confounder process $\phi$ as it only encodes `causal 
		ancestors'. One can also {\it marginalize} the causal graph to obtain a 
		{\it directed mixed graph} from which one can read off the parent 
		graph (see the supplementary material).}
\end{figure*}
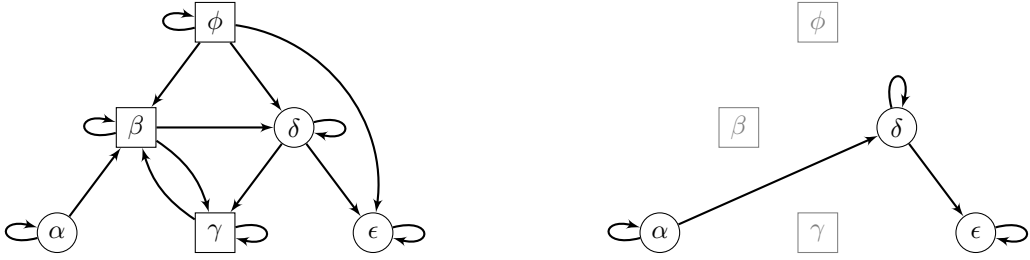

\subsection{A DYNAMICAL CAUSAL MODEL}
\label{ssec:dynCauModel}

We will in this section define what we mean by a {\it dynamical causal model} 
in the case of a linear Hawkes process and also 
define a graph $(V,E)$ which represents the causal structure of the model. The 
node set $V$ is the index set of the coordinate 
processes of the multivariate Hawkes process, thus identifying each node with a 
coordinate process. If we first consider the case where $X = (X_1,\ldots,X_n)$ 
is a multivariate 
random variable, it is common to define a {\it causal} model in terms of a DAG, 
$\mathcal{D}$,  
and a 
structural causal 
model \citep{pearl2009, petersElements2017} by assuming that there exists 
functions $f_i$ and 
error 
terms $\epsilon_i$ such that

\[
X_i = f_i(X_{\pa_\mathcal{D}(X_i)}, \epsilon_i)
\]

for $i = 1,\ldots,n$. The causal assumption amounts to assuming that the
functional relations are stable under interventions. This idea can
be transferred to dynamical 
systems (see also \cite{roysland2012, mogensenUAI2018}).  In the case of a 
linear Hawkes process as described above, we can consider intervening on the 
$\alpha$-process and 
force events to occur at the deterministic times 
$t_1, \ldots, t_k$, and at these times only. In this case, the causal 
assumption 
amounts to assuming 
that the distribution of the intervened system is
governed by the intensities

\begin{align*}
\lambda_t^\beta = \mu_\beta &+ \int_{0}^t g^{\beta\alpha}(t-s)\ \md 
\bar{N}_s^\alpha \\
&+ 
\sum_{\gamma\in V \setminus \{\alpha\}} \int_{0}^t g^{\beta\gamma}(t-s)\ 
\md N_s^\gamma 
\end{align*}

for all $\beta \in V\setminus \{\alpha\}$ and where $\bar{N}_t^\alpha = \sum_{i 
	= 1}^{k} \mathds{1}_{(t_i \leq t)}$. We will not go into a discussion of 
the 
existence 
of these interventional stochastic processes. The above is a {\it hard} 
intervention in the sense that the $\alpha$-process is fixed to be a 
deterministic function of time. Note that one could easily imagine 
other types of interventions such as {\it soft} interventions where the 
intervention process, $\alpha$, is not deterministic. One can also extend this 
to interventions on more than one process. It holds that $N_{t + 
h}^\beta - 
N_t^\beta \sim 
\Pois(\lambda_t^\beta \cdot h)$ in the limit $h\rightarrow 
0$, and we can think of this as a simulation scheme in which we generate the 
points in one small interval in accordance to some distribution depending on 
the history of the process. As such the intensity describes a structural causal 
model at infinitesimal time steps and the $g^{\alpha\beta}$-functions are in a 
causal model 
stable under interventions in the sense that they also describe how the 
intervention process $\bar{N}^\alpha$ enters into the intensity of the other 
processes.

We use the set of functions 
$\{g^{\beta\alpha}\}_{\alpha,\beta\in 
	V}$ to define the {\it causal graph} of the Hawkes process. A {\it graph} 
is a pair $(V,E)$ where $V$ is a set of nodes and $E$ is a set of edges 
between these nodes. We assume that we observe the Hawkes process in the 
time interval $J = [0,T]$, $T\in \mathbb{R}$. The causal graph has node 
set $V$ (the index set of 
the coordinate processes) and the edge $\alpha\rightarrow\beta$ is in the 
causal graph if and only if $g^{\beta\alpha}$ is not 
identically zero on $J$. We call this graph {\it 
	causal} as it is defined using $\{g^{\beta\alpha}\}_{\alpha,\beta \in V}$ 
	which 
is 
a set of mechanisms assumed stable under interventions, and this 
causal 
assumption is therefore analogous to that of a 
classical structural causal 
model as briefly introduced above.

\subsection{PARENT GRAPHS}

In principle, we would like to recover the causal graph, $\mathcal{D}$, 
using local independence tests. 
Often, we will only have partial observation of the dynamical 
system in the 
sense that we only observe the processes in $O\subsetneq V$. We will then aim 
to 
learn the {\it parent graph} of $\mathcal{D}$ on nodes $O$. 

\begin{defn}[Parent graph]
	Let $\mathcal{D} = (V,E)$ be a causal graph and let $O\subseteq V$. The 
	{\it parent graph} of 
	$\mathcal{D}$ on nodes $O$ is the graph $(O,F)$ such that for 
	$\alpha,\beta\in O$, the edge
	$\alpha\rightarrow \beta$ is in $F$ if and only if the edge 
	$\alpha\rightarrow\beta$ is in the causal graph or there is
	a path $\alpha 
	\rightarrow \delta_1 \rightarrow \ldots 
	\rightarrow \delta_k \rightarrow \beta$  in the causal graph such that 
	$\delta_1,\ldots,\delta_k \notin O$, for some $k>0$ . 
	\label{def:paG}
\end{defn}

We denote the parent 
graph of 
the causal graph by $\mathcal{P}_O(\G)$, or just $\mathcal{P}(\G)$ if the set 
$O$ 
used is clear from the context. In applications, a parent graph may provide 
answers to important 
questions as it tells us the causal relationships 
between the observed nodes. A similar idea was applied in DAG-based models by 
\cite{magliacaneACI2016}, though that paper describes an exact method and not a 
screening procedure. In large systems, it can easily be infeasible to 
learn the
complete independence structure of the observed system, and we propose instead 
to estimate the parent graph which can be done efficiently. In 
the supplementary material, we give another characterization of a parent graph. 
Figure 
\ref{fig:paG} contains an example of a causal graph and a corresponding parent 
graph.

\subsection{LOCAL INDEPENDENCE}

Local independence has been studied by several authors and in different classes 
of continuous-time models as well as in time series \citep{aalen1987, 
	didelez2000, didelez2008, eichler2010}. We give an abstract definition of 
local 
independence, following the exposition by \cite{mogensenUAI2018}.

\begin{defn}[Local independence]
	Let $X$ be a multivariate stochastic process and let $V$ be an index set of 
	its coordinate processes. Let $\mathcal{F}_t^D$ denote the complete and 
	right-continuous version of the $\sigma$-algebra 
	$\sigma(\{X_s^\alpha: s\leq t,\alpha \in D \})$, $D \subseteq V$. Let 
	$\lambda$ be a 
	multivariate stochastic 
	process (assumed to be integrable and c\`adl\`ag) such that its coordinate 
	processes are 
	indexed by $V$. For 
	$A,B,C\subseteq V$, we say that $X^B$ is $\lambda$-locally independent of 
	$X^A$ given $X^C$ (or simply $B$ is $\lambda$-locally independent of $A$ 
	given $C$) if the process
	
	\[
	t \mapsto \E(\lambda_t^\beta \mid \mathcal{F}_t^{C\cup A})
	\]
	
	has an $\mathcal{F}_t^C$-adapted version for all $\beta\in B$. We write 
	this as 
	$A\not\rightarrow_\lambda B \mid C$, or simply $A\not\rightarrow B \mid C$.
\end{defn}

In the case of Hawkes processes, the intensities will be used as the 
$\lambda$-processes in the above definition. \cite{didelez2000}, 
\cite{mogensenUAI2018}, and 
\cite{mogensen2018} provide technical details on the definition of local 
independence. Local independence can be 
thought of as a dynamical system analogue to the classical conditional 
independence. It is, however, asymmetric which means that $A\not\rightarrow 
B\mid C$ does not imply $B\not\rightarrow A\mid C$. This is a natural and 
desirable feature of an independence relation in a dynamical system as it helps 
us distinguish between the past and the present. It is important to note that 
by testing local independences we can obtain more information about the 
underlying parent graph than by simply  
assuming full observation and fitting a model to the observed data
(see Figure \ref{fig:causal}).

\subsubsection{Local Independence and the Causal Graph}

To make progress on the learning task, we will in this subsection describe the 
link between the local independence model and the causal graph.

\begin{defn}[Pairwise Markov property \citep{didelez2008}]
	We say that a local independence model satisfies the {\it pairwise 
		Markov 
		property} with respect to a directed graph, $\mathcal{D} = (V,E)$, if 
		the absence 
	of the edge $\alpha 
	\rightarrow\beta$ in $\mathcal{D}$ implies $\alpha 
	\not\rightarrow_\lambda \beta 
	\mid V\setminus \alpha$ for all $\alpha,\beta\in V$. 
\end{defn}

We will make 
the following technical assumption throughout the paper. In 
applications, the functions $g^{\alpha\beta}$ are often assumed to be of the 
below 
type (\cite{laubHawkes2015}).

\begin{ass}
	Assume that $N$ is a multivariate Hawkes process and that we observed $N$ 
	over the interval $J = [0,T]$ where $T > 0$. For all $\alpha,\beta\in V$, 
	the function $g^{\beta\alpha}: \mathbb{R}_+ \rightarrow \mathbb{R}$ is 
	continuous and $\mu_\alpha > 0$.
	\label{ass:gFunc}
\end{ass}

A version of the following result was also 
stated by \cite{eichlerHawkes2017} but no proof was given and we provide one in 
the supplementary material. If $\G_1 = (V,E_1)$ 
and $\G_2 = (V,E_2)$ are graphs, we say that $\G_1$ is a {\it proper subgraph} 
of $\G_2$ if $E_1 \subsetneq E_2$.

\begin{prop}
	The local independence model of a linear Hawkes process satisfies the 
	pairwise Markov property with respect to the causal graph of the process 
	and no proper subgraph of the causal graph has the property.
	\label{prop:pairCau}
\end{prop}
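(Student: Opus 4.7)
My plan handles the two halves of the proposition separately.

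For the sufficiency half, suppose $\alpha \rightarrow \beta$ is absent from the causal graph, so $g^{\beta\alpha}\equiv 0$ on $J$ by definition. Then the intensity
\[
\lambda_t^\beta = \mu_\beta + \sum_{\gamma \in V\setminus\{\alpha\}} \int_0^t g^{\beta\gamma}(t-s)\, \md N_s^\gamma
\]
involves no term in $N^\alpha$, so a c\`adl\`ag version of $\lambda^\beta$ is $\mathcal{F}_t^{V\setminus\{\alpha\}}$-adapted; consequently $\E(\lambda_t^\beta\mid\mathcal{F}_t^V)=\lambda_t^\beta$ has an $\mathcal{F}_t^{V\setminus\{\alpha\}}$-adapted version, which is exactly the statement $\alpha\not\rightarrow_\lambda \beta\mid V\setminus\{\alpha\}$.

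For the minimality half, I would argue by contradiction: assume a proper subgraph satisfies the pairwise Markov property. Then it omits some edge $\alpha \rightarrow \beta$ present in the causal graph, so $g^{\beta\alpha}\not\equiv 0$ on $J$, and the Markov property forces $\lambda_t^\beta$ to admit an $\mathcal{F}_t^{V\setminus\{\alpha\}}$-adapted version $Z_t$. Rearranging the intensity formula gives
\[
\int_0^t g^{\beta\alpha}(t-s)\, \md N_s^\alpha = Z_t - \mu_\beta - \sum_{\gamma\neq\alpha}\int_0^t g^{\beta\gamma}(t-s)\, \md N_s^\gamma,
\]
whose right-hand side is $\mathcal{F}_t^{V\setminus\{\alpha\}}$-measurable, so the contradiction reduces to showing the left-hand side is not.

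To close that gap I would exploit Assumption~\ref{ass:gFunc}: since $g^{\beta\alpha}$ is continuous and nonnegative but not identically zero, it is strictly positive on some open subinterval $(a,b)\subseteq J$. Using the compensator decomposition $N_t^\alpha = \int_0^t \lambda_s^\alpha\,\md s + M_t^\alpha$, with $M^\alpha$ a local $\mathcal{F}_t$-martingale, the integral $\int_0^t g^{\beta\alpha}(t-s)\,\md N_s^\alpha$ acquires a martingale component whose predictable quadratic variation $\int_0^t g^{\beta\alpha}(t-s)^2 \lambda_s^\alpha\,\md s$ is strictly positive on a set of positive probability, precluding a.s.\ equality with an $\mathcal{F}_t^{V\setminus\{\alpha\}}$-measurable random variable. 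The main obstacle is precisely this final step: one must rule out that conditioning on $\mathcal{F}_t^{V\setminus\{\alpha\}}$ collapses the random contribution of the $\alpha$-jumps on $(t-b,t-a)$. This will require a non-degeneracy argument ensuring $\lambda^\alpha > 0$ with positive probability on the relevant window (so that $\alpha$-jumps actually occur there) together with careful bookkeeping between the full filtration and the subfiltration $\mathcal{F}_t^{V\setminus\{\alpha\}}$ to turn martingale orthogonality into a genuine measurability obstruction.
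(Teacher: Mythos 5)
Your first half reproduces the paper's argument exactly and is fine. The minimality half, however, contains a genuine gap --- one you partly acknowledge yourself. Splitting off the martingale part of $\int_0^t g^{\beta\alpha}(t-s)\,\md N_s^\alpha$ and noting that its predictable quadratic variation is positive does not deliver the conclusion: positive quadratic variation is a statement about randomness relative to the \emph{full} filtration $(\mathcal{F}_t)$, not about non-measurability with respect to the sub-$\sigma$-algebra $\mathcal{F}_t^{V\setminus\{\alpha\}}$. To convert orthogonality of $M^\alpha$ to the other coordinates' martingales into a measurability obstruction you would need something like a martingale representation theorem for the subfiltration, together with the awkward fact that the compensator $\int_0^t\lambda_s^\alpha\,\md s$ is itself generally not $\mathcal{F}_t^{V\setminus\{\alpha\}}$-adapted (since $\lambda^\alpha$ depends on $N^\alpha$'s own past and on feedback from other coordinates). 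None of this machinery is supplied, and the ``non-degeneracy argument'' you defer to at the end is precisely the content of the claim being proved.

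The paper closes this gap with a much more elementary device, which I recommend you adopt. Using continuity (Assumption \ref{ass:gFunc}), pick a compact interval $I=[i_0,i_1]\subseteq J$ on which $g^{\beta\alpha}\geq g^{\beta\alpha}_{\min}>0$, and consider the events $D_k$ on which $N^\alpha$ has exactly $k$ jumps in $(T-i_1,T-i_0]$ while \emph{every other} coordinate has no jumps at all on $[0,T]$. Each $D_k$ has positive probability, and on $D_k$ the intensity satisfies $\lambda_T^\beta\geq g^{\beta\alpha}_{\min}\cdot k$. If the edge could be dropped from the graph, then $\lambda_T^\beta=\E(\lambda_T^\beta\mid\mathcal{F}_T^{V})$ would a.s.\ equal an $\mathcal{F}_T^{V\setminus\{\alpha\}}$-measurable variable, hence would be a.s.\ constant on $\bigcup_k D_k$ because the non-$\alpha$ paths are identical (all zero) across every $D_k$; letting $k\to\infty$ contradicts the lower bound. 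This replaces your martingale orthogonality step with an explicit family of positive-probability events on which the subfiltration cannot distinguish outcomes but the intensity is unbounded.
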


\section{GRAPH THEORY AND INDEPENDENCE MODELS}

A {\it graph} is a pair $(V,E)$ 
where $V$ is a finite set of nodes and $E$ a 
finite set of 
edges. We will use $\sim$ to denote a 
generic edge. Each edge is between a pair of nodes (not necessarily 
distinct), and for 
$\alpha,\beta\in V$, $e \in E$, we will write $\alpha\overset{e}{\sim} \beta$ 
to denote that the edge $e$ is between $\alpha$ and $\beta$. We will in 
particular consider the class of {\it 
	directed graphs} (DGs) where between each pair of nodes $\alpha,\beta\in 
	V$
one 
has a subset of the edges $\{\alpha\rightarrow\beta, \alpha\leftarrow\beta\}$, 
and we say that these edges are {\it directed}.

Let $\G_1 = (V,E_1)$ and $\G_2 = (V,E_2)$ be graphs. We say that $\G_2$ is a 
{\it 
	supergraph} of $\G_1$, and write $\G_1 \subseteq \G_2$, if $E_1\subseteq 
	E_2$. 
For a graph $\G = (V,E)$ such that $\alpha,\beta \in V$, we write 
$\alpha\rightarrow_\G \beta$ to indicate that the directed edge from $\alpha$ 
to 
$\beta$ is contained in the edge set $E$. In this case we say that $\alpha$ is 
a {\it parent} of $\beta$. We let $\pa_\G(\beta)$ denote the set of nodes in 
$V$ that are parents of $\beta$. We write 
$\alpha\not\rightarrow_\G \beta$ to indicate that the edge is {\it not} in $E$. 
Earlier work allowed loops, i.e., self-edges $\alpha\rightarrow\alpha$, 
to be either present or absent in the graph \citep{meek2014,mogensenUAI2018, 
mogensen2018}. We assume that all loops are present, though this is not an 
essential 
assumption.

A {\it walk} is a finite 
sequence of nodes, $\alpha_i \in V$, and edges, $e_i\in E$, $\langle \alpha_1, 
e_1, 
\alpha_2, \ldots ,\alpha_k, e_k, \alpha_{k+1} \rangle$ such that $e_i$ is 
between $\alpha_i$ and $\alpha_{i+1}$ for all $i = 1,\ldots,k$ and such that 
an 
orientation of each edge is known. We say that a walk is 
{\it nontrivial} if it contains at least one edge. A {\it 
	path} is a walk such that no node is repeated. A {\it 
	directed} path from $\alpha$ to $\beta$ is a path such that all edges are 
directed and point 
in the direction of $\beta$.

\begin{defn}[Trek, directed trek]
	A {\it trek} between $\alpha$ and $\beta$ is a (nontrivial) path $\langle 
	\alpha, 
	e_1,\ldots,e_k,\beta\rangle$ with no colliders \citep{foygelHalftrek2012}. 
	We say that a trek 
	between 
	$\alpha$ and $\beta$ is {\it directed} from $\alpha$ to $\beta$ if $e_k$ 
	has a head at $\beta$.
\end{defn}

We will formulate the following properties using a general {\it independence 
	model},  $\I$, on $V$. Let $\mathbb{P}(\cdot)$ denote the power set of some 
set. An independence model on $V$ is simply a subset of $\mathbb{P}(V) \times 
\mathbb{P}(V)\times \mathbb{P}(V)$ and can be thought of as a collection of 
independence statements that hold among the processes/variables indexed by $V$.
In subsequent 
sections, the independence models will be defined using the notion of local 
independence. In this case, for $A,B,C\subseteq V$, $A\not\rightarrow_\lambda 
B\mid C$ is equivalent to 
writing $\langle A,B\mid C\rangle \in \I$ in the abstract notation, and we use 
the two 
interchangeably. We do not require $\I$ to be symmetric, i.e., $\langle A,B\mid 
C\rangle \in \I$ does not imply $\langle B,A\mid C\rangle \in \I$. In the 
following, we also use $\mu$-separation which is a 
ternary 
relation and a dynamical model (and asymmetric) analogue to $d$-separation 
or $m$-separation. 

\begin{defn}[$\mu$-separation]
	Let $\G = (V,E)$ be a DMG, and let $\alpha,\beta\in V$ and $C\subseteq 
	V$. 
	We say that a (nontrivial) walk from $\alpha$ to $\beta$, $\langle 
	\alpha, 
	e_1,\ldots,e_k, \beta\rangle$, is $\mu$-connecting given $C$ if $\alpha 
	\notin C$, the edge $e_k$ has a head at $\beta$, every collider on the 
	walk
	is in $\an(C)$ and no noncollider is in $C$. Let $A,B,C \subseteq V$. 
	We 
	say that $B$ is $\mu$-separated from $A$ given $C$ if there is no 
	$\mu$-connecting walk from any $\alpha\in A$ to any $\beta\in B$ given 
	$C$. 
	In this case, we write 
	$\musep{A}{B}{C}$, or $\musepG{A}{B}{C}{\G}$ if we wish to emphasize 
	the 
	graph to
	which the statement relates.
\end{defn}

More graph-theoretical definitions and references are given in 
the supplementary material.

\begin{defn}[Global Markov property]
	We say that an independence model $\I$ satisfies the {\it global Markov 
		property} with respect to a DG, $\G = (V,E)$, if $\musepG{A}{B}{C}{\G}$ 
	implies $\langle A,B \mid C\rangle \in \I$ for all $A,B,C \subseteq V$.
\end{defn}

From Proposition \ref{prop:pairCau}, we know that the local independence model 
of a linear Hawkes process satisfies the pairwise Markov property with 
respect to its causal graph, and using the results in 
\cite{didelez2008} and \cite{mogensenUAI2018} it also satisfies the global 
Markov 
property 
with respect 
to this graph.

\begin{defn}[Faithfulness]
	We say that $\I$ is {\it faithful} with respect to a DG, $\G = (V,E)$, 
	if $\langle A,B \mid C\rangle \in \I$ implies $\musepG{A}{B}{C}{\G}$ for 
	all $A,B,C \subseteq V$.
	\label{def:faith}
\end{defn}

\section{NEW LEARNING ALGORITHMS}

In this section, we state a very general class of algorithms which 
is easily 
seen to 
provide sound causal learning and we describe some specific 
algorithms. We throughout assume that there is some underlying, true DG, 
$\mathcal{D}_0 = (V,E)$, 
describing the causal model and we wish to output 
$\mathcal{P}_O(\mathcal{D}_0)$.
However, this graph is not in general identifiable from the local independence 
model. In the supplementary material, we argue that for an equivalence class of 
parent graphs, there exists a unique member of the class which is a supergraph 
of all other members. Denote this unique graph by $\bar{\mathcal{D}}$. Our 
algorithms 
will output supergraphs of $\bar{\mathcal{D}}$, and the output will therefore 
also be supergraphs of the true
parent graph.

We assume that we are in the `oracle case', i.e., have access to a local 
independence oracle that provides the correct answers. We will say that an 
algorithm is 
{\it sound} if it in the oracle case outputs a 
supergraph of  $\bar{\mathcal{D}}$ and that it is {\it complete} if it outputs 
$\bar{\mathcal{D}}$. We let $\I^O$ denote the local independence model 
restricted to subsets of $O$, 
i.e., this is the observed part of the local independence model. We provide 
algorithms that are guaranteed to be sound, but only complete in particular 
cases. Naturally, one would wish for completeness as well. However, 
complete algorithms can easily be computationally infeasible whereas 
sound algorithms can be very inexpensive 
\citep[e.g., 
][]{mogensenUAI2018}. We think of these sound algorithms as {\it screening 
procedures} as they rule out some causal connections, but do not ensure 
completeness.

\subsection{ANCESTRAL FAITHFULNESS}

Under the faithfulness assumption, every local 
independence implies 
$\mu$-separation in the graph. We assume a weaker, but similar, property to 
show soundness. For learning marginalized 
DAGs, weaker types of 
faithfulness have also been explored, see \cite{zhangRobust2008, 
	zhalamaSATweakerAssump, 
	zhalamaDS2017}.

\begin{defn}[Ancestral faithfulness]
	Let $\I$ be an independence model and let
	$\mathcal{D}$ be a DG. We say that $\I$ satisfies {\it ancestral 
		faithfulness} with respect to $\mathcal{D}$ 
	if for every $\alpha,\beta\in V$ and $C\subseteq V\setminus \{\alpha\}$,  
	$\ind{\alpha}{\beta}{C} \in \I$ implies that there is no 
	$\mu$-connecting directed path from $\alpha$ 
	to 
	$\beta$ given $C$ in $\mathcal{D}$.
\end{defn}

Ancestral faithfulness is a strictly weaker 
requirement than faithfulness. We conjecture that local independence models of 
linear Hawkes processes
satisfy ancestral faithfulness with respect to their causal graphs. 
Heuristically, if there is a directed path from $\alpha$ to $\beta$ which is 
not blocked by any node in $C$, then information should flow from $\alpha$ to 
$\beta$, and this cannot be `cancelled out' by other paths in the graph as the 
linear Hawkes processes are self-excitatory, i.e., no process has a dampening 
effect on any process. This conjecture is supported by the so-called {\it 
Poisson cluster representation} of a linear Hawkes process (see 
\cite{jovanovic2015}).

\subsection{SIMPLE SCREENING ALGORITHMS}
\label{ssec:simpScreen}

As a first step in describing a causal screening algorithm, we will define a 
very general class of learning algorithms that simply test local independences 
and sequentially remove edges. It is easily seen 
that under the assumption of ancestral faithfulness every algorithm in this 
class gives sound learning in the oracle case. The {\it complete} DG on nodes 
$V$ is 
the DG with edge set $\{\alpha\rightarrow\beta \mid \alpha,\beta\in V\}$.

\begin{defn}[Simple screening algorithm]
	We say that a learning algorithm is a {\it simple screening algorithm} if 
	it starts from a complete DG on nodes $O$ and removes an edge $\alpha 
	\rightarrow \beta$ only if a conditioning set $C \subseteq O\setminus 
	\{\alpha\}$ has been found such that $\langle \alpha,\beta \mid C\rangle 
	\in \I^O$.
\end{defn}

The next results describe what can be learned from absent edges in the output 
of a simple screening algorithm.

\begin{prop}
	Assume that $\I$ satisfies ancestral faithfulness with respect 
	to 
	$\mathcal{D}_0 = (V,E)$. The 
	output 
	of any simple screening algorithm is sound in the oracle case.
	\label{prop:soundScreen}
\end{prop}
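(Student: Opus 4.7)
The plan is to argue the contrapositive: I want to show that no edge of $\bar{\mathcal{D}}$ is ever removed by a simple screening algorithm. Since the algorithm starts from the complete DG on $O$ and only removes edges, this immediately yields that the output is a supergraph of $\bar{\mathcal{D}}$, i.e., sound. So fix an edge $\alpha\rightarrow\beta$ in $\bar{\mathcal{D}}$ and fix an arbitrary conditioning set $C\subseteq O\setminus\{\alpha\}$; the task reduces to showing $\langle\alpha,\beta\mid C\rangle\notin\I^O$.

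The key step is to combine the definition/characterization of $\bar{\mathcal{D}}$ from the supplementary material with ancestral faithfulness. Specifically, I would invoke the characterization that $\alpha\rightarrow\beta\in\bar{\mathcal{D}}$ iff, for every $C\subseteq O\setminus\{\alpha\}$, there is a $\mu$-connecting directed path from $\alpha$ to $\beta$ given $C$ in $\mathcal{D}_0$; this is exactly what makes $\bar{\mathcal{D}}$ the appropriate, unremovable learning target within the equivalence class. In the simplest case where $\alpha\rightarrow\beta$ is actually in the true parent graph $\mathcal{P}_O(\mathcal{D}_0)$, this is immediate from Definition \ref{def:paG}: by definition there is a directed path $\alpha\rightarrow\delta_1\rightarrow\ldots\rightarrow\delta_k\rightarrow\beta$ in $\mathcal{D}_0$ with $\delta_1,\ldots,\delta_k\notin O\supseteq C$, so no intermediate node lies in $C$ and the directed path is $\mu$-connecting given $C$.

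Given this $\mu$-connecting directed path, apply the contrapositive of ancestral faithfulness: since $\I$ is ancestrally faithful with respect to $\mathcal{D}_0$ and there is a $\mu$-connecting directed path from $\alpha$ to $\beta$ given $C$ in $\mathcal{D}_0$, we conclude $\langle\alpha,\beta\mid C\rangle\notin\I$, and hence $\langle\alpha,\beta\mid C\rangle\notin\I^O$. Because $C$ was arbitrary, no conditioning set witnesses a removal of $\alpha\rightarrow\beta$ in any execution of a simple screening algorithm, and the edge is preserved.

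The main obstacle is the bookkeeping step for edges of $\bar{\mathcal{D}}$ that do not come directly from a path in $\mathcal{D}_0$ with intermediate nodes outside $O$, i.e., edges that $\bar{\mathcal{D}}$ picks up only because it is the maximal supergraph in the equivalence class. Handling these requires the precise construction of $\bar{\mathcal{D}}$ and of the equivalence relation on parent graphs given in the supplementary material, ensuring that membership of $\alpha\rightarrow\beta$ in $\bar{\mathcal{D}}$ is equivalent to the nonexistence of a witnessing $C\subseteq O\setminus\{\alpha\}$ that $\mu$-separates $\alpha$ from $\beta$ along all directed paths; once this characterization is in hand, the argument above goes through uniformly.
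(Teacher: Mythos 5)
Your core argument is the paper's own proof: for an edge $\alpha\rightarrow\beta$ of the true parent graph $\mathcal{P}_O(\mathcal{D}_0)$, Definition \ref{def:paG} supplies a directed path in $\mathcal{D}_0$ whose intermediate nodes all lie outside $O$, hence outside every candidate conditioning set $C\subseteq O\setminus\{\alpha\}$; this path is $\mu$-connecting given $C$, so the contrapositive of ancestral faithfulness yields $\ind{\alpha}{\beta}{C}\notin\I$, and a simple screening algorithm, which starts from the complete DG and deletes an edge only upon finding such a $C$, never deletes this edge. That is exactly the paper's argument, and the paper's proof in fact stops there: it explicitly sets out to show $\mathcal{P}(\mathcal{D}_0)\subseteq\mathcal{D}$, i.e., it reads ``sound'' in this proposition as ``supergraph of the true parent graph'', and it does not address the edges of $\bar{\mathcal{D}}=D(\bar{\G})$ that lie outside $D(\G_0)=\mathcal{P}(\mathcal{D}_0)$.

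The place where your proposal goes beyond the paper is also where it would break. The characterization you invoke --- that $\alpha\rightarrow\beta$ is an edge of $\bar{\mathcal{D}}$ if and only if for every $C\subseteq O\setminus\{\alpha\}$ there is a $\mu$-connecting \emph{directed path} from $\alpha$ to $\beta$ given $C$ in $\mathcal{D}_0$ --- is not stated in the paper and does not hold in general. An edge belongs to the maximal Markov equivalent DMG because $\beta$ cannot be $\mu$-separated from $\alpha$ by any admissible conditioning set, but the witnessing $\mu$-connecting walks may pass through bidirected edges and colliders and need not contain any directed path from $\alpha$ to $\beta$; indeed $\alpha$ need not be an ancestor of $\beta$ at all. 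Ancestral faithfulness only converts an independence into the absence of a $\mu$-connecting \emph{directed path}, so under that assumption alone you cannot conclude that these extra edges of $\bar{\mathcal{D}}$ survive the screening. The repair is not to establish your characterization but to match the paper's reading of soundness and prove the supergraph relation to $\mathcal{P}(\mathcal{D}_0)$, which the completed part of your argument already does; guaranteeing more than the directed-path edges under a weakened faithfulness assumption is precisely what the paper avoids here (compare Proposition \ref{prop:soundAncProp}, where full faithfulness is assumed when stronger conclusions are needed).
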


\begin{cor}
	Assume ancestral faithfulness of $\I$ with respect to $\mathcal{D}_0$ and 
	let $A,B,C\subseteq O$. If 
	every 
	directed path from $A$ 
	to $B$ goes through $C$ in the output graph of a simple screening 
	algorithm, then every directed path from 
	$A$ to $B$ goes through $C$ in $\mathcal{D}_0$. 
	\label{cor:soundPath}
\end{cor}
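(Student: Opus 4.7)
The plan is to prove the contrapositive: assume there is a directed path in $\mathcal{D}_0$ from some $a \in A$ to some $b \in B$ that does not go through $C$, and produce a corresponding directed path in the output graph from $a$ to $b$ avoiding $C$, contradicting the hypothesis of the corollary. The starting point is Proposition \ref{prop:soundScreen}: the output $\G$ of any simple screening algorithm is a supergraph of $\bar{\mathcal{D}}$, and by the construction of $\bar{\mathcal{D}}$ described just before the proposition, it is therefore also a supergraph of the true parent graph $\mathcal{P}_O(\mathcal{D}_0)$.

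Next, I would use Definition \ref{def:paG} to push a directed path from $\mathcal{D}_0$ down to the parent graph. Write the assumed path as $a = v_0 \to v_1 \to \ldots \to v_m = b$ in $\mathcal{D}_0$, and let $0 = i_0 < i_1 < \ldots < i_r = m$ be the indices with $v_{i_j} \in O$ (the endpoints are in $O$ since $A, B \subseteq O$). For each consecutive pair $(v_{i_j}, v_{i_{j+1}})$, the intermediate nodes on the original path, if any, lie outside $O$, so the parent-graph definition supplies the edge $v_{i_j} \to v_{i_{j+1}}$ in $\mathcal{P}_O(\mathcal{D}_0)$, and hence in $\G$. Concatenating these edges yields a directed path $a = v_{i_0} \to v_{i_1} \to \ldots \to v_{i_r} = b$ in $\G$. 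Since $C \subseteq O$ and the original path avoids $C$, every observed node $v_{i_j}$ visited also lies outside $C$, so the induced path in $\G$ avoids $C$ as well. This contradicts the assumption that every directed path from $A$ to $B$ in $\G$ passes through $C$, completing the argument.

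The proof is really just unpacking definitions and composing two containments (output $\supseteq \bar{\mathcal{D}} \supseteq \mathcal{P}_O(\mathcal{D}_0)$) with the path-shortening property of the parent graph; ancestral faithfulness enters only implicitly, through the cited Proposition \ref{prop:soundScreen}. The only potentially delicate point is the case of consecutive unobserved nodes along the $\mathcal{D}_0$-path, but this is exactly the situation handled by the "$\delta_1,\ldots,\delta_k \notin O$" clause in Definition \ref{def:paG}, so no real obstacle arises.
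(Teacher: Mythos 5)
Your proof is correct and follows essentially the same route as the paper: argue the contrapositive, project the $\mathcal{D}_0$-path onto its observed nodes to get a directed path in $\mathcal{P}_O(\mathcal{D}_0)$ avoiding $C$, and conclude via Proposition \ref{prop:soundScreen} that this path survives in the output graph. The paper states this in two sentences; you have merely spelled out the index bookkeeping of the projection step.
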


\begin{cor}
	If there is no directed path from $A$ to $B$ in the output graph, then 
	there is no directed path from $A$ to $B$ in $\mathcal{D}_0$. 
\end{cor}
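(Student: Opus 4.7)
The plan is to deduce this statement as the special case $C = \emptyset$ of Corollary \ref{cor:soundPath}, which is already available in the excerpt. The key is that the empty set is a universal "blocker" in a vacuous sense: a directed path can never contain a node from $\emptyset$, so the assertion "every directed path from $A$ to $B$ passes through $C$" collapses, when $C = \emptyset$, to "there is no directed path from $A$ to $B$ at all."

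More concretely, I would proceed as follows. First, note that the hypothesis "there is no directed path from $A$ to $B$ in the output graph" is logically equivalent to "every directed path from $A$ to $B$ in the output graph intersects $\emptyset$", since the latter is vacuously satisfied precisely when no such paths exist. Second, invoke Corollary \ref{cor:soundPath} with the choice $C = \emptyset \subseteq O$ to conclude that every directed path from $A$ to $B$ in $\mathcal{D}_0$ must also intersect $\emptyset$. Third, reverse the vacuous-quantification step: the only way every directed path from $A$ to $B$ in $\mathcal{D}_0$ can intersect $\emptyset$ is if there are no such paths in $\mathcal{D}_0$ to begin with.

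I do not anticipate any genuine obstacle. All the causal soundness content is already carried by Proposition \ref{prop:soundScreen} and Corollary \ref{cor:soundPath} via ancestral faithfulness; the present corollary is purely an instantiation. The only stylistic care needed is to make the vacuous-quantification argument explicit, since $A$ and $B$ are taken as subsets of $O$ while directed paths in $\mathcal{D}_0$ may traverse nodes in $V \setminus O$ — but this is harmless because $\emptyset$ intersects neither $O$ nor $V \setminus O$, so the equivalence between "every path goes through $\emptyset$" and "there are no paths" is unaffected by the domain of the paths.
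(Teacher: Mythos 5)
Your proposal is correct: instantiating Corollary \ref{cor:soundPath} with $C=\emptyset$ and unwinding the vacuous quantification gives exactly the stated corollary, and this matches the paper, which offers no separate proof and clearly intends the statement as an immediate consequence of the preceding corollary (equivalently, of the contrapositive argument in its proof). No gaps.
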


\subsection{PARENT LEARNING}

In the previous section, it was shown that if edges are only removed when a 
separating 
set is found the output is sound under the assumption of ancestral 
faithfulness. In this section we give a specific algorithm. The key observation 
is that we can easily retrieve structural information from 
a rather small subset of local independence tests.

Let $\mathcal{D}^t$ denote the output from Subalgorithm \ref{subalgo:trekStep} 
(see below). 
The following result shows that under the 
assumption of faithfulness, $\alpha\rightarrow_{\mathcal{D}^t} \beta$ if and 
only if there is a directed trek from $\alpha$ to $\beta$ in $\mathcal{D}_0$.

\begin{prop}
	There is no directed trek from $\alpha$ to $\beta$ in 
	$\mathcal{D}_0$ if 
	and only if
	$\musepG{\alpha}{\beta}{\beta}{\mathcal{D}_0}$.
	\label{prop:trekSep}
\end{prop}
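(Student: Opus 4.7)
The plan is to prove the two implications separately. For the ``only if'' direction (contrapositive), assume a directed trek $\pi$ from $\alpha$ to $\beta$ exists. By definition $\pi$ is a path with no colliders whose last edge has a head at $\beta$. I claim $\pi$ is already a $\mu$-connecting walk from $\alpha$ to $\beta$ given $\{\beta\}$: the collider condition is vacuously satisfied, the head-at-$\beta$ condition is built into the definition of a directed trek, and because $\pi$ is a path its interior nodes all differ from $\beta$, so no interior non-collider is blocked by the conditioning set $\{\beta\}$. Hence $\alpha$ is not $\mu$-separated from $\beta$ given $\{\beta\}$.

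For the ``if'' direction (contrapositive), start with a $\mu$-connecting walk $w$ from $\alpha$ to $\beta$ given $\{\beta\}$; the plan is to modify $w$ iteratively into a directed trek. Each collider $\delta$ on $w$ lies in $\an(\beta)$ by the collider condition, so there is a directed path from $\delta$ to $\beta$. Pick the collider $\delta^*$ closest to $\beta$ along $w$ and replace the segment of $w$ from $\delta^*$ to $\beta$ with a directed path from $\delta^*$ to $\beta$. The resulting walk still has its last edge pointing into $\beta$; it has strictly fewer colliders, since the appended segment is collider-free and $\delta^*$ itself now has a tail on the side toward $\beta$; and any remaining colliders are still in $\an(\beta)$. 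Iterating finitely many times yields a walk from $\alpha$ to $\beta$ whose last edge points into $\beta$ and which has no colliders. Pruning cycles from this walk produces a path, and a path with no colliders whose final edge has a head at $\beta$ is by definition a directed trek from $\alpha$ to $\beta$.

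The main obstacle is the bookkeeping in the backward direction. One has to verify that the collider-elimination step preserves $\mu$-connection, in particular that intermediate nodes newly introduced through the directed path from $\delta^*$ are not non-colliders equal to $\beta$ (which would be blocked by the conditioning set); this is automatic because the only node on that directed path equal to $\beta$ is its terminus. One also needs the final cycle pruning to preserve both the absence of colliders and the head at $\beta$, which is standard since a collider-free walk has the ``trek shape'' with all edges pointing outward from a single source, and cycle removal modifies neither this shape nor the terminal edge. Finally, one should cross-check against the precise definition of $\mu$-separation used in the supplementary material (colliders in $\an(C)$ versus $\an(C \cup \{\beta\})$), though both conditions coincide here since $\beta \in C$.
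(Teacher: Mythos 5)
Your proof is correct and follows essentially the same route as the paper's: the forward direction observes that a directed trek is itself $\mu$-connecting given $\{\beta\}$, and the backward direction converts a $\mu$-connecting walk into a directed trek by splicing in a directed path from a collider (which lies in $\an(\beta)$) to $\beta$ and then pruning to a path. The only cosmetic difference is that you repeatedly use the collider closest to $\beta$, whereas the paper takes the collider closest to $\alpha$ so that a single splice already yields a collider-free walk.
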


Note that above, $\beta$ in the conditioning set represents the $\beta$-past 
while the other $\beta$ represents the present of the $\beta$-process. While 
there is no distinction in the graph, this interpretation follows from the 
definition of local independence and the global Markov property. We will refer 
to running first Subalgorithm \ref{subalgo:trekStep} and then 
Subalgorithm \ref{subalgo:paStep} (using the output DG from the first as 
input to the second) as the causal screening (CS) algorithm. Intuitively, 
Subalgorithm \ref{subalgo:paStep} simply tests if a candidate set (the parent 
set) is a separating set 
and other candidate sets could be chosen.

\begin{prop}
	The CS algorithm is a simple screening algorithm.
\end{prop}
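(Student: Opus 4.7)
The plan is to verify the two defining requirements of a simple screening algorithm for the composite CS procedure, namely (i) that it initializes with a complete DG on $O$ and (ii) that every edge it removes is justified by a witness local independence statement $\langle \alpha,\beta \mid C\rangle \in \I^O$ with $C\subseteq O\setminus\{\alpha\}$. Both properties are syntactic features of the two subalgorithms, so the proof amounts to reading off the appropriate conditioning sets from their pseudocode.

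First I would observe that Subalgorithm~\ref{subalgo:trekStep} is by construction the first step of CS and starts from the complete DG on $O$; the output of CS is obtained by subsequently applying Subalgorithm~\ref{subalgo:paStep} to the intermediate graph $\mathcal{D}^t$. Since Subalgorithm~\ref{subalgo:paStep} only deletes edges from its input, CS as a whole also begins from the complete DG and only removes edges, which takes care of the initialization requirement. The self-loops never get removed because of the standing assumption that all loops are present in the graphs we consider.

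Next I would go through each of the two edge-removal rules. For Subalgorithm~\ref{subalgo:trekStep}, the relevant test is, in view of Proposition~\ref{prop:trekSep}, $\alpha\not\rightarrow\beta\mid\{\beta\}$, so an edge $\alpha\rightarrow\beta$ is removed only when $\langle \alpha,\beta\mid\{\beta\}\rangle \in \I^O$, and the conditioning set $\{\beta\}$ is contained in $O\setminus\{\alpha\}$ whenever $\alpha\neq \beta$ (loops are never tested). For Subalgorithm~\ref{subalgo:paStep}, the removal of $\alpha\rightarrow\beta$ is triggered by a test whose conditioning set is taken from the current parent structure in $\mathcal{D}^t$, hence again a subset of $O\setminus\{\alpha\}$, and the edge is removed only upon confirming the corresponding statement in $\I^O$. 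Both subalgorithms therefore satisfy clause (ii).

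There is essentially no obstacle here: this is a bookkeeping verification that the conditioning sets used by the two subalgorithms lie in $\mathbb{P}(O\setminus\{\alpha\})$ and that removals are gated by membership in $\I^O$. The only point one has to be slightly careful about is the loop convention, which guarantees that the $C=\{\beta\}$ test in Subalgorithm~\ref{subalgo:trekStep} is always a legal simple-screening test (i.e.\ $\beta\in O\setminus\{\alpha\}$), since $\alpha=\beta$ edges are excluded from consideration by the loop-preservation convention introduced earlier in the paper.
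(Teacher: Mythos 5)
Your verification is correct and matches the paper, which simply notes that the proposition "follows directly from the definitions of the subalgorithms": you unpack exactly that, checking the initialization from the complete DG and that each removal in Subalgorithms \ref{subalgo:trekStep} and \ref{subalgo:paStep} is witnessed by a conditioning set $\{\beta\}$ or $\pa_{\mathcal{D}}(\beta)\setminus\{\alpha\}$ contained in $O\setminus\{\alpha\}$ with the corresponding statement in $\I^O$. Nothing is missing.
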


It is of course of interest to understand under what conditions the edge 
$\alpha\rightarrow\beta$ is guaranteed to be removed by the CS algorithm when 
it is 
not in the 
underlying target graph. In the supplementary material we state and prove 
a result describing one such condition.

	\begin{subalgorithm}
		\SetKwInOut{Input}{input}
		\SetKwInOut{Output}{output}
		\Input{a local independence oracle for $\I^O$}
		\Output{a DG on nodes $O$}
		initialize $\mathcal{D}$ as the complete DG on $O$\;
		\ForEach{$(\alpha,\beta)\in V\times V$}{
			\If{$\alpha \not\rightarrow_{\lambda} \beta\mid 
				\beta$}{delete 
				$\alpha 
				\rightarrow \beta$ from $\mathcal{D}$\; 
			}
		}
		\Return  $\mathcal{D}$
		\newline
		\caption{Trek step}
		\label{subalgo:trekStep}
	\end{subalgorithm}

	\begin{subalgorithm}
		\SetKwInOut{Input}{input}
		\SetKwInOut{Output}{output}
		\Input{a local independence oracle for $\I^O$ and a DG, $\mathcal{D} = 
			(O,E)$}
		\Output{a DG on nodes $O$}
		\ForEach{$(\alpha,\beta)\in V\times V$ such that 
			$\alpha\rightarrow_\mathcal{D}\beta$}{
			\If{\normalfont $\alpha \not\rightarrow_{\lambda} \beta\mid 
				\pa_\mathcal{D}(\beta) \setminus \{\alpha\}$}{delete 
				$\alpha 
				\rightarrow \beta$ from $\mathcal{D}$\; 
			}
		}
		\Return  $\mathcal{D}$
		\newline
		\caption{Parent step}
		\label{subalgo:paStep}
	\end{subalgorithm}

\subsection{ANCESTRY PROPAGATION}

In this section, we describe an additional step which propagates ancestry by 
reusing the output of Subalgorithm \ref{subalgo:trekStep} to remove further 
edges. This comes at a price as one needs faithfulness to ensure soundness. The 
idea is similar to ACI 
\citep{magliacaneACI2016}.

\begin{subalgorithm}
	\SetKwInOut{Input}{input}
	\SetKwInOut{Output}{output}
	\Input{a DG, $\mathcal{D} = 
		(O,E)$}
	\Output{a DG on nodes $O$}
	initialize $E_r = \emptyset$ as the empty edge set\;
	\ForEach{$(\alpha,\beta, \gamma)\in V\times V \times V$ such that
		$\alpha,\beta,\gamma$ are all distinct}{
		\If{$\alpha \rightarrow_\mathcal{D} \beta$, 
			$\beta\not\rightarrow_\mathcal{D}\alpha$, $\beta 
			\rightarrow_\mathcal{D} 
			\gamma$, and 
			$\alpha 
			\not\rightarrow_\mathcal{D} \gamma$}{update $E_r = E_r \cup \{\beta 
			\rightarrow 
			\gamma \}$\; 
		}
	}
	Update $\mathcal{D} = (V, E\setminus E_r)$\;
	\Return  $\mathcal{D}$
	\newline
	\caption{Ancestry propagation}
	\label{subalgo:ancProp1}
\end{subalgorithm}

In ancestry propagation, we exploit the fact that any trek between $\alpha$ and 
$\beta$ (such that $\gamma$ is not on this trek) composed with the edge 
$\beta\rightarrow\gamma$ gives a directed trek 
from $\alpha$ to $\gamma$. We only use the trek 
between $\alpha$ and $\beta$ `in one direction', as a directed trek from 
$\alpha$ to $\beta$. In Subalgorithm 
\ref{subalgo:ancProp2}
(supplementary material), we use a trek 
between $\alpha$ and $\beta$ twice when possible, at the cost of an 
additional 
test.

We can construct an algorithm by first running Subalgorithm 
\ref{subalgo:trekStep}, then Subalgorithm \ref{subalgo:ancProp1}, and finally 
Subalgorithm \ref{subalgo:paStep} (using the output of one subalgorithm as 
input to the next). We will call this the CSAPC algorithm. If we use 
Subalgorithm 
\ref{subalgo:ancProp2} (in the supplementary material) instead of Subalgorithm 
\ref{subalgo:ancProp1}, we will 
call this the CSAP.

\begin{prop}
	If $\I$ is faithful with respect to $\mathcal{D}_0$, then CSAP and 
	CSAPC both
	provide sound 
	learning.
	\label{prop:soundAncProp}
\end{prop}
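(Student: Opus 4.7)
The plan is to reduce the soundness of CSAPC (and analogously of CSAP) to two ingredients: Proposition \ref{prop:soundScreen} handles the screening phases, and a graph-theoretic translation of the outcome of the trek step via Proposition \ref{prop:trekSep} handles the propagation phase. I would treat the algorithm as three phases --- trek step, ancestry propagation, parent step --- and argue that none of them deletes an edge belonging to $\bar{\mathcal{D}}$.

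The trek step and the parent step only remove edges upon witnessing a local independence, so each is a simple screening step; combining this with Proposition \ref{prop:soundScreen} shows that neither removes an edge of $\bar{\mathcal{D}}$ even under the weaker ancestral faithfulness. The real work lies in Subalgorithm \ref{subalgo:ancProp1} (and the analogous Subalgorithm \ref{subalgo:ancProp2}), where deletions are driven by purely graphical reasoning on the intermediate graph rather than by a freshly witnessed local independence.

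For Subalgorithm \ref{subalgo:ancProp1}, I would translate its four conditions on the post-trek graph $\mathcal{D}^t$ into statements in $\mathcal{D}_0$ by invoking Proposition \ref{prop:trekSep} under faithfulness: $\alpha\to_{\mathcal{D}^t}\beta$ corresponds to the existence of a directed trek from $\alpha$ to $\beta$ in $\mathcal{D}_0$, and $\alpha\not\to_{\mathcal{D}^t}\beta$ to its absence. A trek $\alpha\leftarrow\cdots\leftarrow\sigma\to\cdots\to\beta$ whose top node $\sigma$ lies strictly between $\alpha$ and $\beta$ is directed in both directions, so the asymmetry $\alpha\in\dt_{\mathcal{D}_0}(\beta)$ together with $\beta\notin\dt_{\mathcal{D}_0}(\alpha)$ forces the trek to collapse to a directed path $\alpha\to\cdots\to\beta$ in $\mathcal{D}_0$. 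If the edge $\beta\to\gamma$ that the algorithm deletes were present in $\bar{\mathcal{D}}$, the parent-graph definition would supply a directed path $\beta\to\delta_1\to\cdots\to\gamma$ in $\mathcal{D}_0$ with intermediate nodes outside $O$; concatenating gives a directed path from $\alpha$ to $\gamma$, hence $\alpha\in\dt_{\mathcal{D}_0}(\gamma)$, contradicting the fourth hypothesis. So no edge of $\bar{\mathcal{D}}$ is removed. For CSAP the argument is structurally identical; the extra local-independence test in Subalgorithm \ref{subalgo:ancProp2} merely allows one to use the trek between $\alpha$ and $\beta$ in both orientations, adapting the conditions accordingly.

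The main obstacle, and the reason faithfulness rather than ancestral faithfulness is required, is invoking the ``dependence implies $\mu$-connection'' direction of Proposition \ref{prop:trekSep}: the algorithm uses retained edges of $\mathcal{D}^t$ as positive evidence for directed treks in $\mathcal{D}_0$, which is exactly what faithfulness --- but not ancestral faithfulness on its own --- delivers. The remaining work is routine combinatorial manipulation of treks, directed paths and the parent-graph construction, together with checking that the final parent step, being a simple screening step, preserves soundness on the already-reduced graph.
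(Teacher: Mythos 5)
Your proposal is correct and follows essentially the same route as the paper's proof: the trek and parent steps are dispensed with as simple screening steps via Proposition \ref{prop:soundScreen}, and the ancestry-propagation deletions are ruled out by translating the conditions on $\mathcal{D}^t$ into trek statements about $\mathcal{D}_0$ via Proposition \ref{prop:trekSep} and then composing with the directed path witnessing $\beta\rightarrow\gamma$. Your variant --- collapsing the $\alpha$--$\beta$ trek to a directed path using $\beta\not\rightarrow_{\mathcal{D}^t}\alpha$ before concatenating --- is a clean alternative to the paper's step of arguing that $\gamma$ cannot lie on the trek; both yield the same contradiction with $\alpha\not\rightarrow_{\mathcal{D}^t}\gamma$. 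Two small corrections are worth recording. First, your closing diagnosis is backwards: the direction ``retained edge implies a directed trek exists'' is the contrapositive of the global Markov property, which the paper assumes throughout and which requires no faithfulness at all; the place where full faithfulness (rather than ancestral faithfulness) is genuinely indispensable is the converse, namely reading the \emph{removed} edges $\beta\not\rightarrow_{\mathcal{D}^t}\alpha$ and $\alpha\not\rightarrow_{\mathcal{D}^t}\gamma$ as certifying the absence of directed \emph{treks} --- ancestral faithfulness would only exclude $\mu$-connecting directed paths. Second, an edge of $\bar{\mathcal{D}}$ need not be witnessed by a directed path in $\mathcal{D}_0$ avoiding $O$; that characterization belongs to $\mathcal{P}_O(\mathcal{D}_0)$, so, exactly like the paper's own proof, your argument establishes that no edge of $\mathcal{P}_O(\mathcal{D}_0)$ is deleted.
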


\begin{figure*}
	\centering
	\begin{subfigure}[t]{.48\textwidth}
		\centering
		\includegraphics[width=1\linewidth, trim = 0 0 0 0]{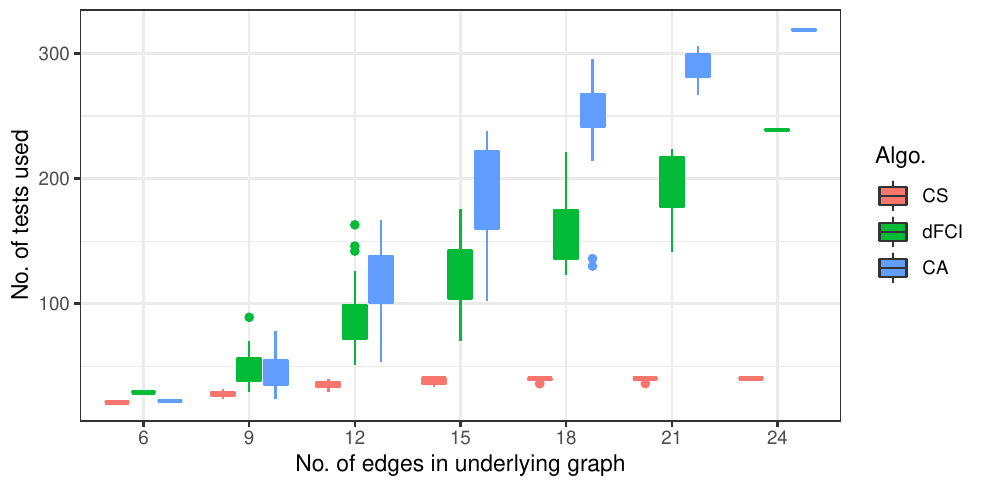}
		\caption{Comparison of number of tests used. For each level of sparsity 
		(number of edges in 
			true graph), we generated 500 graphs, all on 
			5 
			nodes. The number of tests required quickly rises for dFCI and CA 
			while CS spends no more than $2\cdot 5(5 -1)$ tests. The output of 
			dFCI and CA is not considerably more informative as measured by the 
			mean number of excess edges: CS 0.96, dFCI 
			0.07, CA 
			0.81 (average over all levels of sparsity).}
		\label{fig:sub1}
	\end{subfigure} \hfill
	\begin{subfigure}[t]{.48\textwidth}
		\centering
		\includegraphics[width=1\linewidth]{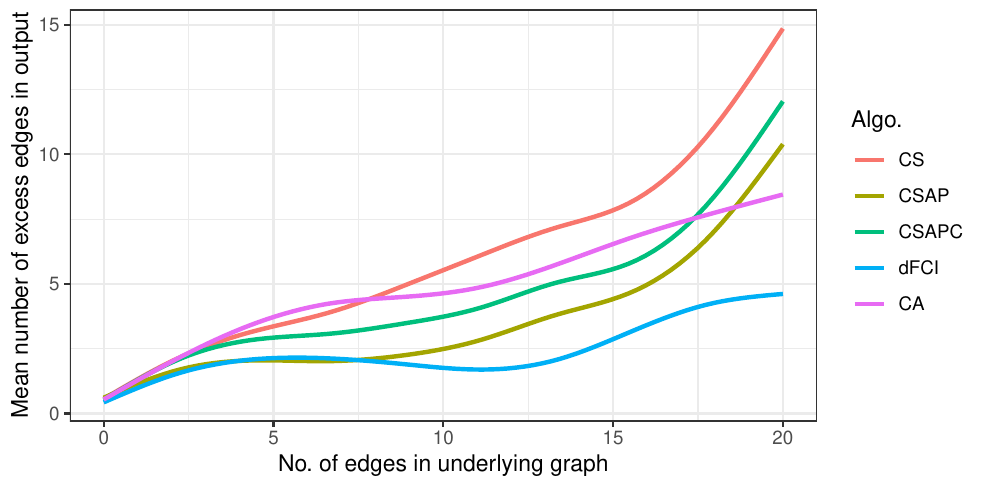}
		\caption{Mean number of excess edges in output graphs 
			for 
			varying 
			numbers of edges (bidirected and directed) in the true graph (all 
			graphs are on 10 nodes), not 
			counting loops.}
		\label{fig:sub2}
		\vfill
	\end{subfigure}
	\caption{Comparison of performance.}
	\label{fig:test}
\end{figure*}

\section{APPLICATION AND SIMULATIONS} 

When evaluating the performance of a sound screening algoritm, the output graph 
is guaranteed to be a supergraph of the true parent graph, and we will say that 
edges 
that are in the output but not in the true graph are {\it excess edges}. For a 
node in a directed graph, the {\it indegree} is the number of directed 
edges adjacent with and pointed into the node, and the {\it outdegree} is the 
number of directed edges adjacent with and pointed away from the node.

One should note that all our experiments are done using an {\it oracle test}, 
i.e., instead of using real or synthetic data, the algorithms simply query an 
oracle for each local independence and receive the correct answer. This tests 
whether or not an algorithm can give good results using an efficient testing 
strategy (i.e., a low number of queries to the oracle) and therefore it 
evaluates the algorithms. This approach separates the 
algorithm from the specific test of local independence and evaluates only the 
algorithm. As such this is highly unrealistic as we would never have access to 
an oracle with real data, however, we should think of these experiments as a 
study of efficiency. The oracle approach to evaluating graphical learning 
algorithms is common in the DAG-based case, see \cite{spirtes2010} for an 
overview.

Also note that the comparison is only made with other constraint-based learning 
algorithms that can actually solve the problem at hand. Learning 
methods that assume full observation (such as the Hawkes methods mentioned in 
Section 
\ref{sec:hawkes}) 
would generally not output a graph with the correct interpretation even in the 
oracle case (see the example in 
Figure \ref{fig:causal}).

\subsection{C. ELEGANS NEURONAL NETWORK}

Caenorhabditis elegans is a roundworm in which the network between neurons has 
been mapped completely \citep{varshneyCelegans2011}. We apply our methods to 
this network as an 
application to a highly complex network. It consists of 279 neurons which are 
connected by both non-directional {\it gap junctions} and directional chemical 
synapses. We will represent the former as an unobserved process and 
the latter as a direct influence which is consistent with the biological system 
\citep{varshneyCelegans2011}. From this network, we sampled subnetworks of 
75 neurons each (details in the supplementary material) and computed the output 
of 
the CS algorithm. These subsampled networks had on average 1109 edges 
(including bidirected edges representing unobserved processes, see the 
supplementary 
material) and on average 424 directed edges. The output graphs had on average 
438 excess edges which is explained by the fact that there are many unobserved 
nodes in the graphs. To compare the output to the true parent graph, we 
computed the rank correlation between the indegrees of the nodes in the output 
graph and the indegrees of the nodes in the true parent graph, and similarly 
for the outdegree 
(indegree 
correlation: 0.94, outdegree correlation: 0.52). 
Finally, we investigated the method's ability to identify the observed nodes of 
highest directed connectivity (i.e., highest in- and outdegrees). The neuronal 
network of c. elegans is inhomogeneous in the sense that some neurons are 
extremely highly connected while others are only very sparsely connected. 
We considered the 
15 nodes of highest indegree/outdegree (out of the 75 observed nodes). On 
average, the CS algorithm placed 
13.4 (in) and 9.2 (out) of these 15 among the 15 most connected nodes.

From the output of the CS algorithm, we can find areas of the neuronal 
network which mediates information from one area to another, e.g., using 
Corollary 
\ref{cor:soundPath}.

\subsection{COMPARISON OF ALGORITHMS}

In this section we compare the proposed causal screening algorithms with 
previously published algorithms that solve similar problems.
\cite{mogensenUAI2018} propose two algorithms, one of which is 
sure 
to output the correct graph when an oracle test is available. They note 
that this complete algorithm is 
computationally very 
expensive and adds little extra information, and therefore we will only 
consider their other algorithm for comparison. We will call this algorithm {\it 
	dynamical} 
FCI (dFCI) as it resembles FCI 
\citep{mogensenUAI2018}. dFCI actually solves a harder learning 
problem (see details in the supplementary material), 
however, it is computationally infeasible for many problems.

The Causal Analysis (CA) algorithm of \cite{meek2014} is a simple screening 
algorithm and we have in this paper argued that it is sound for learning the 
parent graph under the weaker assumption of ancestral 
faithfulness. Even though this algorithm uses a large number of 
tests, it is not guaranteed to provide complete learning 
as 
there may be inseparable nodes that are not adjacent \citep{mogensenUAI2018, 
mogensen2018}.

For the comparison of these algorithms, two aspects are important. As they are 
all sound, one 
aspect is the number of excess edges. The other aspect is 
of course the number of tests needed. The CS and CSAPC 
algorithms use at most $2n(n-1)$ tests and empirically the CSAP uses roughly 
the same number as the two former. This makes them feasible in large 
graphs. The quality of their output is dependent on the sparsity of the true 
graph, 
though the CSAP and CSAPC algorithms can deal considerably better with less 
sparse graphs (Subfigure \ref{fig:sub2}).

\section{DISCUSSION}

We suggested inexpensive constraint-based methods for learning causal structure 
based on testing local independence. An important observation is that local 
independence is asymmetric while conditional independence is symmetric. In a 
certain sense, this may help when constructing learning algorithms as there is 
no need of something like an `orientation phase' as in the FCI. This 
facilitates using very simple methods to give sound causal learning as we do 
not need the independence structure in full to give interesting output. Simple 
screening algorithms may be either adaptive or nonadaptive. We note that 
nonadaptive 
algorithms may be more robust to false conclusions from statistical tests of 
local independence.

The amount of information in the output of the screening 
algorithms 
depends on the sparsity of 
the true graph. However, even in examples with very little sparsity interesting 
structural information can be learned.

We showed that the proposed algorithms have a computational advantage 
over previously published algorithms within this framework. This makes it 
feasible to consider 
causal learning in large networks with unobserved processes. 
We obtained this gain in efficiency in part by outputting only the directed 
part of the causal structure. This means that we may 
be able to answer structural questions, but not questions relating 
to causal effect estimation.

\subsubsection*{Acknowledgments}

This work was supported by VILLUM FONDEN (research grant 13358). We thank Niels 
Richard Hansen and the anonymous reviewers for their helpful comments that
improved this paper.


\bibliography{C:/Users/swmo/Desktop/-/UCPH/PhD/learning/parental/causalScreening/parentalLearning,C:/Users/swmo/Desktop/-/UCPH/PhD/markov/UAIpaperRef}

\newpage

{\centering \huge Supplementary material}

This supplementary material contains additional graph theory, results, and 
definitions, 
as 
well as the 
proofs of the 
main paper.

\section{GRAPH THEORY}

In the main paper, we introduce the class of DGs to represent causal 
structures. One can represent marginalized DGs using the larger class of 
DMGs.
A {\it 
	directed mixed graph} (DMG) is a graph such that any pair of nodes 
$\alpha,\beta\in V$ is joined by a subset of the edges 
$\{\alpha\rightarrow\beta,\alpha\leftarrow\beta,\alpha\leftrightarrow\beta\}$.

We say that edges $\alpha\rightarrow\beta$ and 
$\alpha\leftarrow\beta$ are {\it directed}, and that 
$\alpha\leftrightarrow\beta$ is {\it bidirected}. We say that the edge $\alpha 
\rightarrow \beta$ has a {\it head} at $\beta$ and a {\it tail} at $\alpha$. 
$\alpha\leftrightarrow\beta$ 
has heads at both $\alpha$ and $\beta$. We also introduced a walk 
$\langle \alpha_1, 
e_1, 
\alpha_2, \ldots ,\alpha_n, e_n, \alpha_{n+1} \rangle$. We say 
that $\alpha_1$ and $\alpha_{n+1}$ are endpoint nodes. A nonendpoint node 
$\alpha_i$ on a walk is a {\it collider} if $e_{i-1}$ and $e_i$ both have 
heads 
at $\alpha_i$, and otherwise it is a {\it noncollider}. A cycle is a path 
$\langle\alpha, 
e_1, \ldots, \beta\rangle$ composed with an edge between $\alpha$ and 
$\beta$. 
We say that $\alpha$ is an {\it ancestor} of 
$\beta$ if 
there exists a directed path from $\alpha$ to $\beta$. We 
let $\an(\beta)$ 
denote the set of nodes that are ancestors of $\beta$. For a node set $C$, we 
let $\an(C) = \cup_{\beta\in C} \an(\beta)$. By convention, we say that a 
trivial path (i.e., with no edges) is directed and this means that $C\subseteq 
\an(C)$. 

For DAGs $d$-separation is often used for encoding 
independences. We use the analogous notion of $\mu$-separation which is a 
generalization of 
$\delta$-separation \cite{didelez2000, didelez2008, meek2014, mogensen2018}.

We use the class of DGs to represent the underlying, data-generating 
structure. When only parts of the 
causal system is 
observed, the class of DMGs can be used to represent marginalized DGs 
\cite{mogensen2018}. This can be done using {\it latent projection} 
\cite{vermaEquiAndSynthesis, mogensen2018} which is a map that for a DG (or 
more generally, for a DMG), 
$\mathcal{D} = (V,E)$, and a subset of observed 
nodes/processes, $O\subseteq V$, provides a DMG,
$m(\mathcal{D}, O)$, such that for all $A,B,C \subseteq O$, 

\[
\musepG{A}{B}{C}{\mathcal{D}} \Leftrightarrow 
\musepG{A}{B}{C}{m(\mathcal{D}, 
	O)}.
\]

See 
\cite{mogensen2018} for details on this graphical 
marginalization. 
We say that two DMGs, $\G_1 = (V,E_1), \G_2 = (V,E_2)$, are {\it Markov 
	equivalent} if 

\[
\musepG{A}{B}{C}{\G_1} \Leftrightarrow \musepG{A}{B}{C}{\G_2},
\]

for all $A,B,C \subseteq V$, and we let $[\G_1]$ denote the Markov 
equivalence 
class of $\G_1$. Every Markov 
equivalence class of DMGs has a unique {\it maximal element} 
\cite{mogensen2018}, i.e.,
there exists $\G 
\in [\G_1]$ such that $\G$ is a supergraph of all other graphs in $[\G_1]$. 

For a DMG, $\G$, we will let $D(\G)$ denote the {\it directed part} of 
$\G$, 
i.e., the DG obtained by deleting all bidirected edges from $\G$.

\begin{prop}
	Let $\mathcal{D} = (V,E)$ be a DG, and let $O\subseteq V$. Consider $\G 
	= 
	m(\mathcal{D}, O)$. For $\alpha,\beta\in O$ it holds that $\alpha\in 
	\an_\mathcal{D}(\beta)$ if and only if $\alpha\in \an_{D(\G)}(\beta)$. 
	Furthermore, the directed part of $\G$ equals the parent graph of 
	$\mathcal{D}$ on nodes $O$, i.e., $D(\G) = \mathcal{P}_O(\mathcal{D})$.
\end{prop}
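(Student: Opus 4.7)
The plan is to prove both assertions by unfolding the definition of the latent projection $m(\mathcal{D},O)$ for a DG and comparing it term-by-term with Definition \ref{def:paG}. Recall (from the references cited in the excerpt) that for $\alpha,\beta\in O$ the projected DMG $\G = m(\mathcal{D},O)$ contains the directed edge $\alpha\rightarrow\beta$ if and only if there is a directed walk from $\alpha$ to $\beta$ in $\mathcal{D}$ all of whose intermediate nodes lie in $V\setminus O$, and it contains a bidirected edge $\alpha\leftrightarrow\beta$ in the analogous way using treks with non-collider intermediate nodes outside $O$. Only the first clause is needed for the directed part $D(\G)$.

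First I would establish the second assertion, $D(\G) = \mathcal{P}_O(\mathcal{D})$, by direct comparison. In $\mathcal{P}_O(\mathcal{D})$, by Definition \ref{def:paG}, the edge $\alpha\rightarrow\beta$ is present iff either $\alpha\rightarrow_\mathcal{D}\beta$ (the $k=0$ case, after also allowing the degenerate empty sequence of intermediate nodes) or there is a directed path $\alpha \rightarrow \delta_1 \rightarrow \ldots \rightarrow \delta_k \rightarrow \beta$ in $\mathcal{D}$ with $\delta_1,\ldots,\delta_k\in V\setminus O$. These are exactly the two cases collapsed by the latent projection rule, so the edge sets coincide.

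Next I would prove the ancestor equivalence using this. For the forward direction, given a directed path $\alpha = \gamma_0 \rightarrow \gamma_1 \rightarrow \ldots \rightarrow \gamma_\ell = \beta$ in $\mathcal{D}$ with $\alpha,\beta\in O$, extract the subsequence of nodes lying in $O$, call it $\alpha = o_0, o_1, \ldots, o_m = \beta$. Between consecutive $o_j$ and $o_{j+1}$, all intermediate nodes are in $V\setminus O$, so by the characterization of edges in $D(\G) = \mathcal{P}_O(\mathcal{D})$ we get $o_j \rightarrow_{D(\G)} o_{j+1}$, yielding a directed path from $\alpha$ to $\beta$ in $D(\G)$. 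For the converse, take a directed path $\alpha = o_0 \rightarrow_{D(\G)} o_1 \rightarrow_{D(\G)} \ldots \rightarrow_{D(\G)} o_m = \beta$; by the characterization each edge $o_j\rightarrow_{D(\G)} o_{j+1}$ lifts to a directed walk from $o_j$ to $o_{j+1}$ in $\mathcal{D}$, and concatenating these yields a directed walk from $\alpha$ to $\beta$ in $\mathcal{D}$; since any directed walk contains a directed path between the same endpoints, $\alpha\in\an_\mathcal{D}(\beta)$.

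The only genuine obstacle is being precise about which version of the latent projection is in force; in particular, whether the "path/walk with intermediate nodes outside $O$" formulation used in Definition \ref{def:paG} matches the latent-projection definition from \cite{mogensen2018, vermaEquiAndSynthesis}. Once this is spelled out (and the trivial $k=0$ case is included, as is standard), both statements reduce to bookkeeping. No new graphical machinery or a case analysis over collider/non-collider structure is needed because only directed edges and directed paths are involved, and the self-loop convention from Section 3 does not interfere.
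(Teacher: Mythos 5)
Your proof is correct, and it differs from the paper's in one structural respect. The paper handles the two claims independently: for the ancestor equivalence it simply cites the known fact from \cite{mogensen2018} that the latent projection preserves ancestor relations among observed nodes ($\alpha\in\an_\mathcal{D}(\beta)$ iff $\alpha\in\an_\G(\beta)$) and then observes that ancestry ignores bidirected edges; for the identity $D(\G)=\mathcal{P}_O(\mathcal{D})$ it makes the same definitional comparison you make. You instead prove the edge-level identity first and then \emph{derive} the ancestor equivalence from it, by decomposing a directed path in $\mathcal{D}$ at its $O$-nodes (forward direction) and lifting and concatenating edges of $D(\G)$ into a directed walk in $\mathcal{D}$, then extracting a path (converse). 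This makes the argument self-contained rather than resting on the cited preservation result, at the cost of a little bookkeeping; the only points that need care are the ones you already flag (the degenerate $k=0$ case of Definition \ref{def:paG}, and reducing directed walks with off-$O$ intermediate nodes to directed paths with the same property by deleting cycles, which keeps the intermediate nodes outside $O$). Both routes are sound; yours trades a citation for an elementary two-step reduction.
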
 

\begin{proof}
	Note first that $\alpha\in 
	\an_\mathcal{D}(\beta)$ if and only if $\alpha\in 
	\an_\G(\beta)$ \cite{mogensen2018}. Ancestry is only defined by 
	the directed 
	edges, and it follows that $\alpha\in 
	\an_\G(\beta)$ if and only if $\alpha\in 
	\an_{D(\G)}(\beta)$. For the second statement, the definition of the 
	latent 
	projection gives that there is a directed edge from $\alpha$ to $\beta$ 
	in 
	$\G$ if and only if there is a directed path from $\alpha$ to $\beta$ 
	in 
	$\mathcal{D}$ such that no nonendpoint node is in $O$. By definition, 
	this 
	is the parent graph, $\mathcal{P}_O(\mathcal{D})$.
\end{proof}

In words, the above proposition says that if $\G$ is a marginalization 
(done by 
latent projection) of $\mathcal{D}$, then 
the ancestor relations of $\mathcal{D}$ and $D(\G)$ are the 
same among the observed nodes. It also says that our learning target, the 
parent graph, is actually the directed part of the latent projection on the 
observed nodes. In the next subsection, we use this to describe what is 
actually identifiable from the induced independence model of a graph.

\subsection{MAXIMAL GRAPHS AND PARENT GRAPHS}

Under faithfulness of the local independence model and the causal 
graph, we know that the maximal DMG is a correct representation of the 
local 
independence structure in the sense that it encodes exactly the local 
independences that hold in the local independence model. From the maximal 
DMG, 
one can use results on equivalence classes of DMGs to obtain every other 
DMG 
which encodes the observed local independences \citep{mogensen2018} and from 
this graph one can find the parent graph as simply the directed part. 
However, 
it may require an infeasible number of tests to output such a maximal DMG. 
This 
is not 
surprising, seeing that the learning target encodes this complete 
information 
on local independences.

Assume that $\mathcal{D}_0 = (V,E)$ is the underlying causal graph
and that $\G_0 = (O,F), O\subseteq V$ is the 
marginalized graph over 
the observed variables, i.e., the latent projection of $\mathcal{D}_0$. In 
principle, we would like to output $\mathcal{P}(\mathcal{D}_0) = D(\G_0)$, 
the 
directed part
of $\G_0$. However, no algorithm can in general output this graph by 
testing 
only local 
independences as Markov equivalent DMGs may not have the same parent graph. 
Within each Markov equivalence class of DMGs, there is a unique 
maximal graph. Let $\bar{\G}$ denote the maximal graph which is Markov 
equivalent of $\G_0$. The DG $D(\bar{\G})$ is a supergraph of $D(\G_0)$ and 
we 
will say that a learning algorithm is complete if it is guaranteed to 
output 
$D(\bar{\G})$ as no algorithm testing local independence only can identify 
anything more than the equivalence class.

\section{COMPLETE LEARNING}

The CS algorithm provides sound learning of the 
parent graph of a general DMG under the assumption of ancestral 
faithfulness. 
For a subclass of DMGs, the algorithm actually provides complete learning. 
It 
is of interest to find sufficient 
graphical conditions to ensure that the algorithm removes an edge 
$\alpha\rightarrow\beta$ which is not in the true parent graph. In this 
section, we state and prove one such condition which can be 
understood as `the true parent set is always found for unconfounded 
processes'. 
We let 
$\mathcal{D}$ denote the output of 
the CS algorithm.

\begin{prop}
	If $\alpha \not\rightarrow_{\G_0} \beta$ and there is no $\gamma\in 
	V\setminus\{\beta\}$ such that $\gamma\leftrightarrow_{\G_0}\beta$, 
	then 
	$\alpha\not\rightarrow_\mathcal{D}\beta$. 
\end{prop}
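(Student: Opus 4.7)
The aim is to show that the output $\mathcal{D}$ of the CS algorithm (Trek step followed by Parent step) does not contain $\alpha\rightarrow\beta$. Since the Parent step deletes $\alpha\rightarrow\beta$ precisely when $\alpha\not\rightarrow_\lambda\beta\mid\pa_{\mathcal{D}^t}(\beta)\setminus\{\alpha\}$ holds, and by the global Markov property (which holds in this setting) this local independence is implied by the $\mu$-separation $\musepG{\alpha}{\beta}{\pa_{\mathcal{D}^t}(\beta)\setminus\{\alpha\}}{\G_0}$, the plan is to establish this $\mu$-separation in $\G_0$, where $\G_0 = m(\mathcal{D}_0,O)$ is the latent projection and $\mathcal{D}^t$ denotes the graph after the Trek step. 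Throughout I will use that $D(\G_0) = \mathcal{P}_O(\mathcal{D}_0)$.

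The main step is to analyze any putative $\mu$-connecting walk $\omega$ from $\alpha$ to $\beta$ in $\G_0$ given $C := \pa_{\mathcal{D}^t}(\beta)\setminus\{\alpha\}$. Since such a walk must end with an edge having a head at $\beta$, and the hypothesis rules out any bidirected edge incident to $\beta$ in $\G_0$, the last edge of $\omega$ is necessarily of the form $\gamma\rightarrow\beta$ for some $\gamma\in O$. The hypothesis $\alpha\not\rightarrow_{\G_0}\beta$ forces $\gamma\neq\alpha$, so $\gamma$ is a genuine intermediate node on $\omega$ and, because the edge at $\gamma$'s side has a tail, $\gamma$ is a noncollider. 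To block $\omega$ it therefore suffices to show that $\gamma\in C$, i.e., that $\gamma\rightarrow\beta$ survives the Trek step.

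This is where ancestral faithfulness enters. Since $\gamma\rightarrow_{\G_0}\beta$, and $D(\G_0)$ is the parent graph, there exists a directed path from $\gamma$ to $\beta$ in $\mathcal{D}_0$ with no intermediate node lying in $O$. Such a directed path has no colliders, and because every intermediate node is outside $O$, none of them lies in $\{\beta\}$; thus the path is $\mu$-connecting given $\{\beta\}$ in $\mathcal{D}_0$. Ancestral faithfulness then rules out the local independence $\gamma\not\rightarrow_\lambda\beta\mid\beta$, so the Trek step does not delete $\gamma\rightarrow\beta$, i.e., $\gamma\in\pa_{\mathcal{D}^t}(\beta)$, and since $\gamma\neq\alpha$ we conclude $\gamma\in C$. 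Hence every candidate $\mu$-connecting walk is blocked at its penultimate node, so $\musepG{\alpha}{\beta}{C}{\G_0}$ holds and the Parent step deletes $\alpha\rightarrow\beta$.

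The main obstacle I anticipate is bookkeeping around the three graphs involved: the underlying causal graph $\mathcal{D}_0$, its latent projection $\G_0$ on $O$, and the intermediate output $\mathcal{D}^t$. In particular, care is needed in translating between an edge $\gamma\rightarrow_{\G_0}\beta$ and the directed path it represents in $\mathcal{D}_0$, since ancestral faithfulness as stated refers to paths in $\mathcal{D}_0$, not in $\G_0$; the no-bidirected-edge hypothesis is precisely what rules out the remaining case (a latent confounder feeding $\beta$ with a head) that would otherwise allow a $\mu$-connecting walk to escape the argument.
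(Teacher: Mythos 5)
Your proof is correct and follows essentially the same route as the paper's: every $\mu$-connecting walk to $\beta$ must end in a directed edge $\gamma\rightarrow\beta$ with $\gamma\neq\alpha$ (the no-bidirected-edge hypothesis and $\alpha\not\rightarrow_{\G_0}\beta$ ruling out the alternatives), and ancestral faithfulness guarantees this edge survives, so $\gamma$ blocks the walk as a noncollider in the conditioning set. The only slight imprecision is that the Parent step's conditioning set is $\pa_{\mathcal{D}_i}(\beta)\setminus\{\alpha\}$ for the \emph{current} graph $\mathcal{D}_i$, possibly a proper subset of $\pa_{\mathcal{D}^t}(\beta)\setminus\{\alpha\}$ since edges are deleted as that step runs; this is harmless because $\gamma\rightarrow\beta$ lies in the true parent graph and hence is never removed at any stage by Proposition \ref{prop:soundScreen}, which is exactly how the paper's proof (tracking the sequence $\mathcal{D}_1,\ldots,\mathcal{D}_N$) phrases it.
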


\begin{proof}
	Let $\mathcal{D}_1, \mathcal{D}_2, \ldots, \mathcal{D}_N$ denote the 
	DGs 
	that are constructed when running the algorithm by sequentially 
	removing 
	edges, starting from the complete DG, $\mathcal{D}_1$. Consider a 
	connecting walk 
	from 
	$\alpha$ to $\beta$ in $\G_0$. It must be of the form $\alpha \sim 
	\ldots 
	\sim \gamma 
	\rightarrow \beta$, $\gamma\neq \alpha$. Under ancestral faithfulness, 
	the edge $\gamma \rightarrow\beta$ is in $\mathcal{D}$, 
	thus $\gamma\in\pa_{\mathcal{D}_i}(\beta)$ for all $\mathcal{D}_i$ that 
	occur during the algorithm, and therefore when $\langle \alpha, \beta 
	\mid	
	\pa_{\mathcal{D}_i}(\beta) \setminus \{\alpha\} \rangle$ is tested, the 
	walk is closed. Any walk 
	from $\alpha$ to $\beta$ is of this form, thus also closed, and we have 
	that $\musep{\alpha}{\beta}{\pa_{\mathcal{D}_i}(\beta)}$ and therefore 
	$\langle \alpha,\beta \mid \pa_{\mathcal{D}_i}(\beta) \setminus 
	\{\alpha\}\rangle \in \I$. The 
	edge $\alpha 
	\rightarrow_{\mathcal{D}_i} \beta$ is removed and thus absent in the 
	output 
	graph, $\mathcal{D}$.
\end{proof}

\section{ANCESTRY PROPAGATION}

We state Subalgorithm \ref{subalgo:ancProp2} here.

\setcounter{algocf}{3}

\begin{subalgorithm}
	\SetKwInOut{Input}{input}
	\SetKwInOut{Output}{output}
	\Input{a local independence oracle for $\I^O$ and a DG, $\mathcal{D} = 
		(O,E)$}
	\Output{a DG on nodes $O$}
	initialize $E_r = \emptyset$ as the empty edge set\;
	\ForEach{$(\alpha,\beta, \gamma)\in V\times V \times V$ such that
		$\alpha,\beta,\gamma$ are all distinct}{
		\If{$\alpha \sim_\mathcal{D} \beta$, $\beta \rightarrow_\mathcal{D} 
			\gamma$, and 
			$\alpha 
			\not\rightarrow_\mathcal{D} \gamma$}{
			\If{$\langle \alpha,\gamma\mid \emptyset\rangle \in 
				\I^O$}{update 
				$E_r = E_r \cup \{\beta 
				\rightarrow 
				\gamma \}$\;} 
		}	
	}
	Update $\mathcal{D} = (V, E\setminus E_r)$\;
	\Return  $\mathcal{D}$
	\newline
	\caption{Ancestry propagation}
	\label{subalgo:ancProp2}
\end{subalgorithm}

Composing Subalgorithm \ref{subalgo:trekStep}, Subalgorithm 
\ref{subalgo:ancProp2}, and 
Subalgorithm \ref{subalgo:paStep} is referred to as the causal screening, 
ancestry propagation (CSAP) algorithm. If we use Subalgorithm 
\ref{subalgo:ancProp1} instead of Subalgorithm \ref{subalgo:ancProp2}, we 
call 
it the CSAPC algorithm (C for cheap as this does not entail any additional 
independence tests compared to CS).

\section{APPLICATION AND SIMULATIONS}

In this section, we provide some additional details about the c. elegans 
neuronal network and the simulations.

\subsection{C. ELEGANS NEURONAL NETWORK}

For each connection between two neurons a different number of synapses are 
present (ranging from 1 to 37). We only consider connections with more than 
4 
synapses when we define the true underlying network. When sampling the 
subnetworks, highly connected neurons were sampled with 
higher probability to avoid a fully connected subnetwork when marginalizing.

\subsection{COMPARISON OF ALGORITHMS}

As noted in the main paper, the dFCI algorithm solves a strictly harder 
problem. By using the additional graph theory in the supplementary 
material, we 
can understand the output of the dFCI algorithm as a supergraph of the 
maximal 
DMG, $\bar{\G}$. There is 
also a version of the dFCI which is guaranteed to output not only a 
supergraph 
of $\bar{\G}$, but the graph $\bar{\G}$ itself. Clearly, from the output of 
the 
dFCI algorithm, one can simply take the directed part of the output and 
this is 
a supergraph of the underlying parent graph.

\section{PROOFS}

In this section, we provide the proofs of the result in the main paper.

\begin{proof}[Proof of Proposition \ref{prop:pairCau}]
	Let $\mathcal{D}$ denote the causal graph. Assume first that 
	$\alpha\not\rightarrow_\mathcal{D}\beta$. Then $g^{\beta\alpha}$ is 
	identically zero over the observation interval, and it follows directly 
	from the 
	functional form of $\lambda_t^\beta$ that 
	$\alpha\not\rightarrow\beta\mid 
	V\setminus \{\alpha\}$. This shows that the local independence model 
	satisfies the pairwise Markov property with respect to $\mathcal{D}$.
	
	If instead $g^{\beta\alpha} \neq 0$ over $J$, there exists $r\in J$ 
	such 
	that $g^{\beta\alpha}(r) \neq 0$. From continuity of $g^{\beta\alpha}$ 
	there exists a compact interval of positive measure, $I\subseteq 
	J$, such that $\inf_{s\in I}(g^{\beta\alpha}(s)) \geq 
	g_{\min}^{\beta\alpha}$ 
	and $g_{\min}^{\beta\alpha} 
	> 0$. Let $i_0$ and $i_1$ denote the endpoints of this interval, $i_0 < 
	i_1$. We consider now the events 
	
	\begin{align*}
	D_k = (N_
	{T - i_0}^\alpha - N_
	{T - i_1}^\alpha = k, N_T^\gamma = 0 \text{ for all } \gamma \in 
	V\setminus \{\alpha\})
	\label{eq:setsDk}
	\end{align*}

	\noindent $k\in \mathbb{N}_0$. Then under Assumption \ref{ass:gFunc}, 
	for 
	all $k$
	
	\[
	\lambda_T^\beta \mathds{1}_{D_k} \geq \mathds{1}_{D_k} 
	\int_I 
	g^{\beta\alpha}(T-s)\ \md 
	N_s^\alpha \geq
	g_{\min}^{\beta\alpha}\cdot k \cdot \mathds{1}_{D_k}.
	\]
	
	\noindent Assume for contradiction that $\beta$ is locally independent 
	of 
	$\alpha$ given $V\setminus \{\alpha\}$. Then $\lambda_T^\beta = 
	\E(\lambda_T^\beta \mid 
	\mathcal{F}_T^{V}) = \E(\lambda_T^\beta \mid 
	\mathcal{F}_T^{V\setminus 
		\{\alpha\}})$ is constant on $\cup_k D_k$ and furthermore $\PR(D_k) 
	> 
	0$ for 
	all $k$. However, this contradicts the above inequality when $k 
	\rightarrow 
	\infty$.
\end{proof}

\begin{proof}[Proof of Proposition \ref{prop:soundScreen}]
	Let $\mathcal{D}$ denote the DG which is output by the algorithm. We 
	should 
	then show that $\mathcal{P}(\mathcal{D}_0) \subseteq \mathcal{D}$. 
	Assume 
	that $\alpha 
	\rightarrow_{\mathcal{P}(\mathcal{D}_0)} \beta$. In this case, there is 
	a 
	directed path from $\alpha$ to $\beta$ in $\mathcal{D}_0$ such that no 
	nonendpoint node 
	on this directed walk is in $O$ (the 
	observed coordinates). Therefore for any $C\subseteq 
	O\setminus 
	\{\alpha\}$ there exists a directed $\mu$-connecting walk from $\alpha$ 
	to 
	$\beta$ in $\mathcal{D}_0$ and by ancestral faithfulness it follows 
	that 
	$\ind{\alpha}{\beta}{C} \notin \I$. The algorithm starts from the 
	complete
	directed graph, and the above means that the directed edge from 
	$\alpha$ 
	to $\beta$ will not be 
	removed.
\end{proof}

\begin{proof}[Proof of Corollary \ref{cor:soundPath}]
	Consider some directed path from $\alpha$ to $\beta$ in $\mathcal{D}_0$ 
	on 
	which no node is in $C$. Then there is also a directed path from 
	$\alpha$ 
	to $\beta$ on which no nodes is in $C$ in the graph 
	$\mathcal{P}(\mathcal{D}_0)$, and therefore also in the output graph 
	using 
	Proposition \ref{prop:soundScreen}.
\end{proof}

\begin{proof}[Proof of Proposition \ref{prop:trekSep}]
	Assume that there is a $\mu$-connecting walk from $\alpha$ to $\beta$ 
	given 
	$\{\beta\}$. If this walk has no colliders, then it is a directed trek, 
	or 
	can be reduced to one. 
	Otherwise, assume that $\gamma$ is the collider which is the closest to 
	the 
	endpoint
	$\alpha$. Then $\gamma \in \an(\beta)$, and composing the subwalk from 
	$\alpha$ to $\gamma$ with the directed path from $\gamma$ to $\beta$ 
	gives 
	a directed trek, or it can be reduced to one. On the other hand, assume 
	there is a directed trek 
	from 
	$\alpha$ to 
	$\beta$. This is $\mu$-connecting from $\alpha$ to $\beta$ given 
	$\{\beta\}$.
\end{proof}

\begin{proof}[Proof of Proposition \ref{prop:soundAncProp}]
	Assume $\beta\rightarrow_{\mathcal{P}(\mathcal{D}_0)} \gamma$. 
	Subalgorithms 
	\ref{subalgo:trekStep} and \ref{subalgo:paStep} are both
	simple screening algorithms, and they will not remove this edge. Assume 
	for 
	contradiction that 
	$\beta\rightarrow\gamma$ is removed by Subalgorithm 
	\ref{subalgo:ancProp1}. 
	Then 
	there 
	must exist $\alpha\neq \beta,\gamma$ and a directed trek from $\alpha$ 
	to 
	$\beta$ in $\mathcal{D}_0$. On this directed trek, $\gamma$ does not 
	occur 
	as this would imply 
	a directed trek either from $\alpha$ to $\gamma$ or from $\beta$ to 
	$\alpha$, thus implying $\alpha\rightarrow_\mathcal{D}\gamma$ or 
	$\beta\rightarrow_\mathcal{D}\alpha$, respectively ($\mathcal{D}$ is 
	the 
	output graph of Subalgorithm \ref{subalgo:trekStep}). As $\gamma$ does not 
	occur on the trek, composing this trek with the edge $\beta 
	\rightarrow \gamma$ would give a directed trek from $\alpha$ to 
	$\gamma$. 
	By faithfulness, $\langle \alpha,\gamma\mid \gamma\rangle \notin \I$, 
	and 
	this is a contradiction as $\alpha\rightarrow\gamma$ would not have 
	been 
	removed during Subalgorithm 1.
	
	We consider instead CSAP. Assume for contradiction that 
	$\beta\rightarrow\gamma$ is removed during Subalgorithm 
	\ref{subalgo:ancProp2}. There exists in $\mathcal{D}_0$ either a 
	directed 
	trek from $\alpha$ 
	to $\beta$ or a directed trek from $\beta$ to $\alpha$. If $\gamma$ is 
	on 
	this trek, then $\gamma$ is not $\mu$-separated from $\alpha$ given the 
	empty set (recall that there are loops at all nodes, therefore also at 
	$\gamma$), and using faithfulness we conclude that $\gamma$ is not on 
	this 
	trek. Composing it with the edge $\beta\rightarrow\gamma$ would give a 
	directed trek from $\alpha$ to $\gamma$ and using faithfulness we 
	obtain a 
	contradiction.	
\end{proof}



\end{document}